\DeclarePairedDelimiter\floor{\lfloor}{\rfloor}
\DeclareMathOperator*{\argmin}{arg\,min}
\let\originalleft\left
\let\originalright\right
\renewcommand{\left}{\mathopen{}\mathclose\bgroup\originalleft}
\renewcommand{\right}{\aftergroup\egroup\originalright}
\newcommand{\norm}[1]{\left\lVert#1\right\rVert}
\newcommand{\expect}[2]{\mathbb{E}_{#1}\left[#2\right]}
\newcommand{\expec}[1]{\mathbb{E}\left[#1\right]}
\newcommand{\Var}[1]{\textrm{Var}\left[#1\right]}
\renewcommand{\d}{\text{d}}
\newcommand{\simiid}{\stackrel{iid}{\sim}}
\newcommand{\R}{\mathbb{R}}
\newcommand{\N}{\mathbb{N}}
\newcommand{\abs}[1]{\left|#1\right|}
\newcommand{\eset}[1]{{\left\{#1\right\}}}
\renewcommand{\t}[1]{{\textrm{#1}}}
\newcommand{\Prob}[1]{\mathbb{P}\left[#1\right]}
\newtheoremstyle{def}
{12pt}   
{6pt}   
{\normalfont}  
{0pt}       
{\bfseries} 
{.}         
{5pt plus 1pt minus 1pt} 
{}          
\theoremstyle{plain}
\newtheorem{theorem}{Theorem}[section]
\newtheorem{lemma}{Lemma}[section]
\newtheorem{example}{Example}[section]
\newtheorem{assumption}{Assumption}[section]
\newtheorem{definition}{Definition}[section]
\newtheorem{proposition}{Proposition}[section]
\def\centerarc[#1](#2)(#3:#4:#5)
\title{Isotropic randomization for one-sample testing in metric spaces}
\author{Matthieu Bult\'e$^{1,2}$ and Helle Sørensen$^1$}
\date{%
  {\small
    $^1$Department of Mathematical Sciences, University of Copenhagen\\%
    $^2$Faculty of Business Administration and Economics, Bielefeld University\\[2ex]%
  }
}
\begin{document}

\maketitle

\begin{abstract}
  We address the problem of testing hypotheses about a specific value of the Fréchet mean in metric spaces, extending classical mean testing from Euclidean spaces to more general settings. We extend an Euclidean testing procedure progresively, starting with test construction in Riemannian manifolds, leveraging their natural geometric structure through exponential and logarithm maps, and then extend to general metric spaces through the introduction of admissible randomization techniques. This approach preserves essential geometric properties required for valid statistical inference while maintaining broad applicability. We establish theoretical guarantees for our testing procedure and demonstrate its effectiveness through numerical experiments across different metric spaces and distributional settings. The practical utility of our method is further illustrated through an application to wind data in western Denmark, showcasing its relevance for real-world statistical analysis.
\end{abstract}

\section{Introduction} \label{sec-intro}

The statistical analysis of non-standard data types has gained increasing attention as new methods of measurement and data collection emerge across various fields. This has led to the development of methods for analyzing random variables taking values in metric spaces, also called random objects, where only a notion of distance between points is available rather than the rich structure of a vector space.

The study of random objects spans multiple application domains. In functional data analysis, methods have been developed for analyzing curve data \cite{ramsay_functional_2005}. Random objects also appear in neuroimaging through the analysis of correlation matrices from fMRI data \cite{petersen_frechet_2019, chewi_gradient_2020, thanwerdas_theoretically_2022}, and in network science through the study of adjacency matrices representing social networks \cite{dubey_frechet_2020}.

Probability distributions are a particularly well-studied example of random objects and various approaches have been developed for their analysis. They have been studied as images of Hilbert spaces under transformations \cite{petersen_functional_2016}, as specific Hilbert spaces with tailored addition and scalar multiplication operators \cite{van_den_boogaart_bayes_2014}, and as metric spaces with distances constructed to expose certain properties or invariances \cite{panaretos_invitation_2020, srivastava_functional_2016}.

The statistical theory for random objects has seen substantial development in recent years. Fundamental work has addressed hypothesis testing and inference \cite{dubey_frechet_2019, dubey_frechet_2020, mccormack_stein_2022, kostenberger_robust_2023}, alongside various approaches to regression \cite{petersen_frechet_2019, bulte_medoid_2024, hanneke_universal_2021} and time series models \cite{jiang_testing_2023, bulte_autoregressive_2024}. Since metric spaces offer limited inherent structure, additional assumptions are often introduced to ensure well-defined statistical quantities. A common approach is to assume the metric space is a Hadamard space, which provides a rich geometric framework while maintaining generality \cite{auscher_probability_2003, bacak_computing_2014}.

The Fr\'echet mean, a generalization of the expected value to metric spaces, has been the subject of extensive theoretical investigation. It exposes unexpected asymptotic properties which are not found in Euclidean spaces, such as for instance the example of smeariness \cite{eltzner_smeary_2018}. Recent work has examined its concentration properties \cite{brunel_concentration_2023}, asymptotic distributional behavior \cite{bhattacharya_omnibus_2017, yokota_law_2018}, or more fundamental properties of the quantity itself in various scenarios \cite{mccormack_equivariant_2023, mccormack_stein_2022, hundrieser_lower_2024}.

In this paper, we focus on the problem of testing whether the Fr\'echet mean of a distribution on a metric space equals a hypothesized element of the space. This extends the classical problem of testing the mean in Euclidean space to the more general setting of metric spaces. We start by considering the case of Riemannian manifolds, where the exponential and logarithm maps provide natural tools for constructing tests. We then extend our framework to general metric spaces through the introduction of admissible randomization, which preserve key geometric properties needed for valid inference. We demonstrate the practical utility of our approach through numerical experiments and a case study of circular data.

The paper is organized as follows: Section \ref{sec-background} provides an introduction to Fréchet means in metric spaces and to the mean testing problem. It presents a gradually more general solution to the mean testing problem up to an approach in Riemannian manifolds. Section \ref{sec-meantest} introduces the main contribution of the paper, a general approach to mean testing in metric spaces. Section \ref{sec-numerical} illustrates the performance of the test on a series of numerical experiments in various metric spaces and distributional setups. Finally, Section \ref{sec-realdata} demonstrates the use of the method to a real dataset.
\section{Background} \label{sec-background}

\subsection{Fréchet mean and metric spaces}
Let $(\Omega, d)$ be a metric space equipped with the Borel $\sigma-$algebra induced from the metric topology on $\Omega$. A random variable $X$ over $\Omega$ is a Borel measurable function from some probability space to $\Omega$. For $p \geq 1$, the space $L^p(\Omega)$ contains all random variables $X$ such that the $p-$th moment of the distance function is well-defined, that is, $\expec{d(X, \omega)^p} < \infty$ for some $\omega \in \Omega$ --- and hence for all $\omega' \in \Omega$ since by the triangle inequality and Jensen's inequality $\expec{d(\omega', X)^p} \leq 2^{p-1}\expec{d(\omega, X)^p} + 2^{p-1} d(\omega, \omega')^p < \infty$. For a random variable $X \in L^2(\Omega)$, consider the expected value
\begin{equation}\label{eq-frechet-fn-def}
  F_X(\omega) = \expec{d(X, \omega)^2}.
\end{equation}
Fréchet \cite{frechet_elements_1948} proposes the minimizer of this quantity as a generalization of the expectation in the Euclidean case. This minimizer, when it exists, together with the minimal value attained by $F_X$, are commonly called the \textit{Fréchet mean} and \textit{Fréchet variance} of $X$
\begin{equation}\label{eq-frechet-def}
  \expec{X} = \argmin_{\omega \in \Omega} F_X(\omega) \qquad \Var{X} = F_X(\expec{X}).
\end{equation}
The Fréchet mean generalizes the expected value to metric spaces, and, similarily to Euclidean spaces, provides a notion of \textit{center of the distribution} of $X$. This stems from the fact that for a real-valued random variable $X \sim P$, the integral $\int x P(\d x)$ is the minimizer of the Fréchet function. However, unlike the Euclidean case, the Fréchet mean is not guaranteed to exist for any random variable in $L^2(\Omega)$, and when it does, it is not necessarily unique. Consider for example the case where the space $\Omega$ is the $d-$dimensional sphere $S^d = \eset{ x \in  \R^{d+1} : \norm{x} = 1 }$ and $X$ is uniformly distributed on $S^d$. Then $F_X(\omega)$ is constant and hence a unique minimizer does not exist. 

Given a sample $X_1, \ldots, X_n$ independent of copies of $X \in L^2(\Omega)$ the Fréchet mean and variance of $X$ can be estimated from their sample counterparts constructed through minimization of the empirical Fréchet function,
\begin{equation}\label{eq-frechet-emp}
  \hat \mu_n = \argmin_{\omega \in \Omega} \frac{1}{n} \sum_{i=1}^n d(X_i, \omega)^2 \qquad \hat V_n = \frac{1}{n} \sum_{i=1}^n d(X_i, \hat\mu_n)^2.
\end{equation}

The following assumptions will be made throughout the paper to quantity the behavior of these estimators. The first assumption, common in the study of Fréchet means and more generally M-estimators, requires that the theoretical and empirical Fréchet means uniquely exist. It is a central assumption in proving that the empirical Fréchet mean is consistent, that is, $d(\mu, \hat\mu_n) = o_P(1)$, see for instance \cite[Corrolary 3.2.3]{vaart_asymptotic_1998}.

\begin{assumption} \label{ass-separated}
  The random variable $X$ has a unique Fréchet mean $\mu \in \Omega$. Its sample estimator $\hat\mu_n$ exists almost surely, and for any $\varepsilon > 0$, the population Fréchet mean satisfies $\inf_{d(\omega, \mu) > \varepsilon} F_X(\omega) > F_X(\mu)$.
\end{assumption}

The second assumption, also standard in the study of M-estimators, provides control of the complexity of the metric space, see for instance \cite{vaart_asymptotic_1998}. It is commonly found in various forms in the study of random objects and Fréchet means, see \cite{dubey_frechet_2019, schotz_convergence_2019-1}.

\begin{assumption} \label{ass-cover}
  Let $N(\varepsilon, U)$ be the covering number of $U \subset \Omega$ with balls of size $\varepsilon$. Assume the following
  \begin{enumerate}
    \item For any $\omega \in \Omega$, $\int_0^1 \sqrt{1 + \log N(\varepsilon\delta/2, B_\delta(\mu))}\, \d \varepsilon \rightarrow 0$ as $\delta \rightarrow 0$.
    \item The entropy integral $\int_0^1 \sqrt{1 + \log N(\varepsilon, \Omega)}\, \d \varepsilon$ is finite.
  \end{enumerate}
\end{assumption}

We will be concerned with the problem of using a sample $X_1, \ldots, X_n$ independent of copies of $X \in L^2(\Omega)$ to test whether the Fréchet mean of $X$ takes a specific hypothesized value $\mu_0 \in \Omega$,
\begin{equation}\label{eq-hypothesis}
  H_0: \expec{X} = \mu \qquad\t{vs.}\qquad H_1: \expec{X} \neq \mu.
\end{equation}
Our test is based on the empirical Fréchet mean and variance of $X$, as well as their behavior under the null hypothesis, based on the results in \cite{dubey_frechet_2019}. The contribution of this work lies in proposing an approach to randomization suitable for the testing problem in general metric spaces and finite sample. 


\subsection{Mean test on the real line}\label{subsec-realline}

Suppose that we observe independent and identically distributed random variables $X_1,\ldots,X_n \sim P$ in $\R$ with $\expec{\abs{X}} < \infty$. Given a $\mu \in \R$, we aim at a testing procedure for the hypothesis in Equation \eqref{eq-hypothesis} with no parametric assumptions on $P$. For simplicity, we restrict the model class to distributions symmetric around $\mu$, which is equivalent to saying that the law of $X$ is invariant under the \textit{reflection map} $g_\mu : x \mapsto 2\mu - x$,
\begin{equation*}
  \Prob{X \in A} = \Prob{g_\mu \cdot X \in A} \qquad \forall A \in \mathcal{B}(\R),
\end{equation*}
where we use the notation $g_\mu \cdot X$ to denote the function application $g_\mu(X)$. Following the approach presented in Chapter 17 of \cite{lehmann_testing_2022}, the symmetry of $X$ can be used to construct a randomization that preserves the distribution of $X$ under the null hypothesis. Given a $z \in \eset{0, 1}$, define the \textit{randomized variable}
\begin{equation}\label{eq-r-randomization}
  g_\mu^z \cdot X =
  \begin{cases}
    X &\t{if } z = 0, \\
    g_\mu \cdot X &\t{if } z = 1.
  \end{cases}
\end{equation}
By symmetry, under the null hypothesis, $g_\mu^z \cdot X$ has the same distribution as $X$ for $z \in \eset{0, 1}$. This property carries over to the distribution of any test statistic $T$ evaluated on a randomized sample. That is, under the assumption of symmetry and an arbitrary binary vector $z \in \eset{0, 1}^n$, the randomized statistic $T(g_\mu^{z_1} \cdot X_1, \ldots, g_\mu^{z_n} \cdot X_n)$ has the same distribution as $T(X_1, \ldots, X_n)$.

Consider now all $2^n$ binary randomization vectors and let $T_{(1)}, \ldots, T_{(2^n)}$ be the associated evaluations of the test statistic, sorted. For a nominal level $\alpha \in (0, 1)$, a level-$\alpha$ hypothesis test for $H_0 : \expec{X} = \mu$ can be constructed by rejecting the null hypothesis if the observed test statistic $T(X_1, \ldots, X_n)$ is \textit{too extreme} compared to the randomized sample. Since the randomization $g_\mu$ is self-inverse, the set $\eset{\t{id}, g_\mu}$ is a group, and the randomization procedure, together with the symmetry of $X$, corresponds to the \textit{Randomization Hypothesis} framework of \cite{lehmann_testing_2022}. By Theorem 17.2.1 of the same manuscript, this test has the desired level $\alpha$.

In practice though, applying the $2^n$ randomizations to construct the test is computationally infeasible for large $n$. Instead, one recognized that the testing procedure can be linked to an expected value with respect to the uniform distribution over the product group $\eset{\t{id}, g_\mu}^n$, which can be approximated via Monte Carlo techniques by sampling a large number of randomizations and computing the test statistic for each of them. For a given number of replicates $B$ and a significance level $\alpha$, one can sample $Z_1, \ldots, Z_n \simiid \t{Bernoulli}(1/2)$ and compute the test statistic $T_b = T(X_1^\star, \ldots, X_n^\star)$ where $X_i^\star = g_\mu^{Z_i} \cdot X_i$. Then, the level-$\alpha$ hypothesis test is constructed following the same procedure as outlined above, applied to the sorted statistics $T_{(1)}, \ldots, T_{(B)}$. A natural choice for the test statistic $T(X_1, \ldots, X_n)$ is the empirical variance of the sample, given by
\begin{equation}\label{eq-var-emp}
  \hat V_n(X_1, \ldots, X_n) = \frac{1}{n} \sum_{i=1}^n (X_i - \bar X_n)^2.
\end{equation}

\begin{figure}[!t]
    \hspace*{-.8cm}
    \centering
    \begin{tikzpicture}
        \draw[dotted, rounded corners=4pt] (0, 0) -- ++(12, 0) -- ++(0,-6.5) -- ++(-12, 0) -- cycle;
        \node at (6+2.95, -1) {$\mu \neq \mu_0$};
        \node at (6+2.95, -3) {\includegraphics[width=5.5cm]{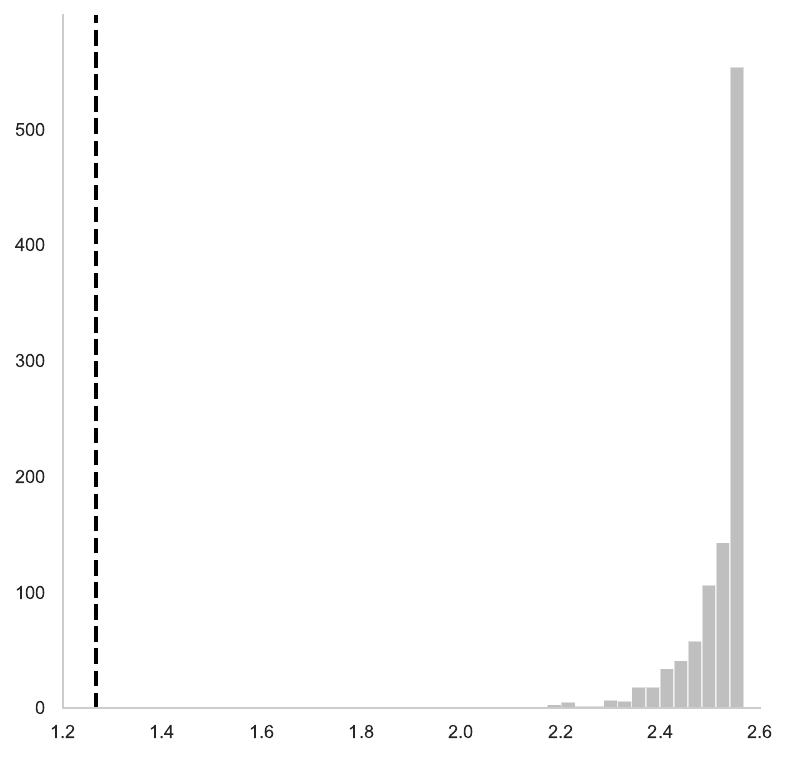}};

        \node at (2.95, -1) {$\mu = \mu_0$};
        \node at (2.95, -3) {\includegraphics[width=5.5cm]{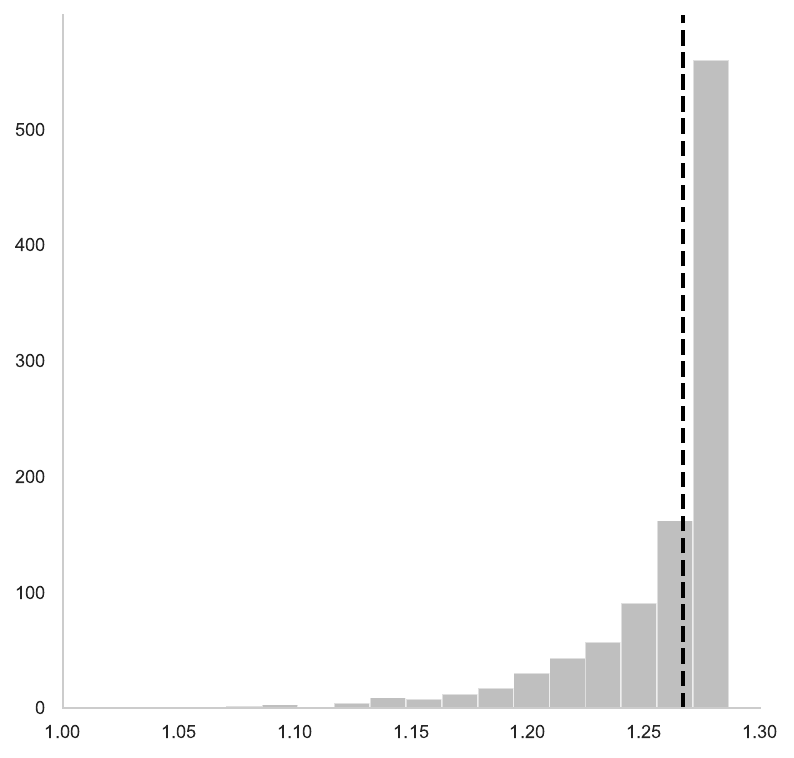}};

        \node at (6, -6) {\includegraphics[width=15cm]{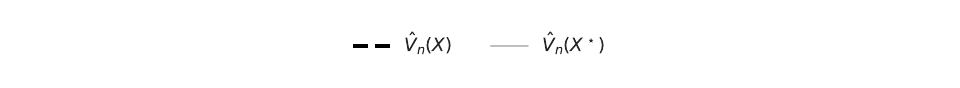}}; 
    \end{tikzpicture}
  
    \caption{For a fixed sample $X_1, \ldots, X_{50} \sim N(\mu_0, 1)$ with $\mu_0 = 1$, the panels display the distribution (grey histogram) of the randomized test statistic $\hat V_n(X^\star)$ with $1000$ randomizations for $\mu = \mu_0$ (left) and $\mu = 0 \neq \mu_0$ (right) against the value of the variance on the original data $\hat V_n(X)$ (dashed horizontal line).}\label{fig:real-variance}
  \end{figure}

The construction above provides a principled approach to testing symmetry on the real line using randomization techniques, without requiring parametric assumptions beyond symmetry. The choice of empirical variance as the test statistic is particularly natural: under symmetry around $\mu$, we expect the spread of observations to be balanced on either side of $\mu$. When the true mean differs from the hypothesized value $\mu$, the empirical variance tends to increase as observations are shifted away from $\mu$, making it sensitive to departures from the null hypothesis. Hence, the empirical variance is too extreme if it is too small compared to the randomized sample as illustrated in Figure \ref{fig:real-variance}. That is, for a sorted sample of randomized variances $V_{(1)}, \ldots, V_{(B)}$, the test is rejected at level $\alpha$ if $\hat V_n(X_1, \ldots, X_n) \leq V_{(k)}$ with $k = \lfloor \alpha B \rfloor$.

\subsection{Extension to Riemannian manifolds} \label{subsec-riemmanian}

The extension of these ideas to Riemannian manifolds presents several challenges. First, the notion of reflection needs to be appropriately generalized to account for the manifold's geometry. Second, the test statistic must be adapted to capture meaningful deviations from the null hypothesis while respecting the manifold structure. In particular, we will need to carefully consider how to define a variance-like quantity that preserves the desirable properties of sensitivity to asymmetry and invariance under the appropriate generalization of reflection. This section develops these extensions, showing how the fundamental principles of randomization testing can be preserved in the more general setting.

As a first generalization step, we will see how the intuition of the randomization procedure on the real line extends naturally to directional data on the unit circle $S^1 = \eset{ x \in \R^2 : \norm{x} = 1 }$. The distance between two points $x, y \in S^1$ is given by $d(x, y) = \arccos(x^\top y)$. Using $\angle(x, y)$ to denote the angle between the vectors $x$ and $y$, we have $x^\top y = \norm{x}\norm{y}\cos(\angle(x, y)) = \cos(\angle(x, y))$. Thus, the distance between two points on the circle corresponds to the angle between them. For a point $x \in S^1$, denote by $\theta_x \in [0, 2\pi)$ its angular representation with $\theta_{(1, 0)} = 0$. Directly generalizing the randomization procedure from the real line, define the reflection map in angular representation with
\begin{equation}\label{eq-reflection-circle}
  g_{\mu} \cdot \theta_x = 2\theta_\mu - \theta_x \mod 2\pi
\end{equation}
Without loss of generality, we can assume that $\mu = (1, 0)$, since this can be achieved without affecting distances via a rotation. The reflection map $g_\mu$ then corresponds to a sign flip of the angle or equivalently a the reflection of $y$ through the $x$-axis in the vector representation. This transformation does not change distances between points and is thus an isometry.

To generalize this construct to a complete connected $d-$dimensional Riemannian manifold $M$, we need to introduce the notions of geodesics, exponential and logarithm maps, for a thorough and rigorous introduction to Riemannian geometry, we refer the reader to \cite{chavel_cambridge_2006}. We will illustrate each of these concepts using the circle as a concrete example. Geodesics are curves $\gamma : I \to M$ defined over an interval $I$ that are locally length-minimizing, playing the role that straight lines do in Euclidean space. At any point $p \in M$, the tangent space $T_pM$ is a $d-$dimensional vector space that can be thought of as containing all possible velocities of curves passing through $p$. For the circle, the tangent space at any point is simply a line -- isomorphic to $\mathbb{R}$ -- tangent to the circle at that point. The exponential map $\exp_p: T_pM \to M$ takes a tangent vector $v \in T_p M$ and follows the geodesic starting at $p$ with initial velocity $v$ for one unit of time, i.e., $\exp_p(v) = \gamma(1)$ where $\gamma$ is the unique geodesic such that $\gamma(0) = p$ and $\gamma'(0) = v$. On the circle, $\exp_p(v)$ corresponds to starting at $p$ and moving counterclockwise by an angle of $|v|$ radians if $v$ is positive, or clockwise if $v$ is negative. We call the \textit{cut locus} of $p$ the boundary of the set $\eset{ v \in T_p M \mid d(\exp_p v, p) = \norm{v} }$ and denote it by $\t{cut}(p)$. The distance of a point $q \in M$ to the cut locus is called the \textit{injectivity radius} of $p$ and is denoted by $\t{inj}(p)$. On the circle, the cut locus of any point $p$ is its antipodal point, $\t{cut}(p) = \eset{-p}$, and hence the injectivity radius at any point is $\t{inj}(p) = \pi$. Within the injectivity radius of $p$, the exponential map is a diffeomorphism between $T_p M$ and a neighborhood of $p$ in $M$ with inverse $\log_p$. That is, for any point $q \in M$ with $d(p, q) < \t{inj}(p)$, it holds that $\exp_p (\log_p q) = q$.

Given a point $\mu \in S^1$, we can now define the map $g_\mu$ in terms of the exponential and logarithm maps. For $x \in S^1$, the reflection $g_\mu \cdot x$ provided in \eqref{eq-reflection-circle} as
\begin{equation} \label{eq-reflection-circle-exp}
  g_\mu \cdot x =
  \begin{cases}
    x &\t{if } x \in \t{cut}(\mu), \\
    \exp_\mu(-\log_\mu(x)) &\t{otherwise}.
  \end{cases}
\end{equation}
This map is also referred to in the literature as the \textit{geodesic symmetry} since $g_\mu \cdot x = \gamma_{-}(1)$ where $\gamma_{-}$ is the geodesic with $\gamma_{-}(0) = \mu$ and $\gamma_{-}'(0) = -\log_\mu x$. The randomization procedure in \eqref{eq-r-randomization} can be written on the circle more generically as first sampling a transformation $\mathbf{g}$ from the set $G_\mu = \eset{ \t{id}, g_\mu }$ and setting $X^\star = \mathbf{g} \cdot X$. The set of mappings $G_\mu$ corresponds to the set of all isometries preserving 0 in the tangent space of $S^1$ in $\mu$.

For tangent space of dimension $d > 1$, alternative transformations of the vector space can be possible. Consider a $d-$dimensional Riemannian manifold $M$, then for any point $p \in M$, the tangent space $T_p M$ is isomorphic to $\R^d$ and we can consider any isometry on $\R^d$ mapping the 0 vector to itself as candidates for the randomization maps: this is the orthonormal group $O(d)$ containing the rotations and reflections. This allows to define a set of mapping on $M$ preserving the test mean $\mu \in M$ via $G_\mu = \eset{ g_\mu^Q : Q \in O(d) }$ where
\begin{equation*}
  g_\mu^Q \cdot x =
  \begin{cases}
    x &\t{if } x \in \t{cut}(\mu), \\
    \exp_\mu(Q \log_\mu(x)) &\t{otherwise}.
  \end{cases}
\end{equation*}

In general a Riemannian manifold, even with further standard regularity conditions, this set of maps is not as well-behaved as on the real line or on the circle. The maps $g_\mu^Q$ still map $\mu$ to itself but are not always isometries themselves -- even if $Q$ is. Furthermore, even if the random variable $X$ is symmetric around its Fréchet mean $\mu$ and is almost surely within the injectivity radius of $\mu$, the maps $g_\mu^Q$ do not necessarily preserve the Fréchet mean. However, if the Fréchet function of $X$ is convex, an adjacent concept of mean preservation still holds. Assuming that $X$ is almost surely contained within the injectivity radius of $\mu$ and that the Fréchet function $F_X$ defined in Equation \eqref{eq-frechet-fn-def} is convex, the gradient of $F_X$ in a point $p \in \t{inj}(\mu)$ is given by $\t{grad} F_X(p) = -2\expec{\log_p(X)}$ and the Fréchet mean of $X$ exists and solves the score equation $\expec{\log_\mu(X)} = 0$, see \cite{kendall_probability_1990, le_consistency_1998, eltzner_smeary_2018}. In this situation, any $g_\mu^Q \in G_\mu$  preserves the Fréchet mean $\mu$ by convexity together with
\begin{align*}
  \expec{\log_\mu( g_\mu^Q  \cdot X)} = \expec{ \log_\mu(\exp_\mu(Q \log_\mu(X))) } = Q \expec{\log_\mu(X)} = 0.
\end{align*}
While this support restriction may appear to be rather restrictive, we will show in the next generalization that for a large class of Riemannian manifolds, the maps $g_\mu^Q$ can be used to construct a valid randomization without having to assume that $X$ is almost surely within the injectivity radius of $\mu$.

While the previous discussion focused on manifolds where the cut locus plays a crucial role in defining our randomization maps, there exist important manifolds in statistical applications where the cut locus is empty. This simplifies the construction of the randomization procedure significantly, as illustrated by the following example of the space of symmetric positive definite (SPD) matrices equiped with the Bures-Wasserstein distance.

\begin{example}[Bures-Wasserstein]\label{ex-bures}
  Let $\mathcal{D}_2(\R^p)$ be the set of probability measures $\mu$ on $\R^p$ such that $\mu$ has a density and $\expec{X^2} < \infty$ for $X \sim \mu$. The Wasserstein distance \cite{panaretos_invitation_2020} of order 2 between two measures $\mu, \nu \in \mathcal{D}_2(\R^p)$ is defined as
  \begin{equation*}
    d_{W_2}^2(\mu, \nu) = \inf_{\gamma \in \Pi(\mu, \nu)} \expect{(X,Y) \sim \gamma}{\norm{X - Y}_2^2}
  \end{equation*}
  where $\Pi(\mu, \nu)$ is the set of all joint distributions on $\R^p \times \R^p$ with marginals $\mu$ and $\nu$. The Wasserstein distance can be used to define a distance $d_\mathcal{B}$ on the set of covariance matrices $\mathbb{S}^p_+$ via centered Gaussian distributions, resulting in the \textit{Bures-Wasserstein distance} $d_{\mathcal{B}}(A,B) = d_{W_2}(\mathcal{N}(0, A), \mathcal{N}(0, B))$. The metric space $(\mathbb{S}^p_+, d_{\mathcal{B}})$ is a Riemannian manifold called the \textit{Bures-Wasserstein space}, and the distance function has the following closed form expression
  \begin{equation*}
    d_\mathcal{B}(A, B)^2 = \t{tr} A + \t{tr} B - 2\t{tr} (A^{1/2} B A^{1/2})^{1/2}.
  \end{equation*}
  In this space, the cut loci are empty sets and the exponential map is a non-isomorphic diffeomorphism over the whole space. See \cite{bhatia_bureswasserstein_2019, chewi_gradient_2020} for more details on Bures-Wasserstein spaces.
\end{example}

The generalization presented in this section demonstrates how the basic reflection principle from $\R$ extends naturally to Riemannian manifolds through three key steps: (1) first extending to the circle $S^1$ where reflections correspond to angle reversals, (2) generalizing to arbitrary Riemannian manifolds using the exponential and logarithm maps to define geodesic symmetries, and (3) incorporating the full orthogonal group $O(d)$ as a richer example of transformations to capture all possible isometries of $\R^d$. While additional technical conditions are needed compared to the Euclidean case --- particularly regarding the cut locus and convexity of the Fréchet function --- the fundamental principle of constructing randomization maps that preserve both the test mean and the metric structure remains the same.
\section{Isotropic randomization in metric spaces}\label{sec-meantest}

The randomization tests presented earlier rely heavily on the existence of a vector space structure to express the randomization, either local for Riemannian manifolds or global for Euclidean spaces. However, many interesting metric spaces, such as graphs, trees, or stratified spaces, lack these differential geometric tools. In this section, we develop a general framework for randomization tests based on isometries that preserve the key properties of the Euclidean approach while being applicable to any metric space with sufficient symmetries. This generalization allows us to extend mean testing procedures to a broader class of spaces while maintaining theoretical guarantees on test validity and power.

\subsection{Random isotropies}

Let $(\Omega, d)$ be a metric space and denote by $\t{Iso}(\Omega)$ the set of all bijective isometries from $\Omega$ onto itself. This set is a group under composition, called the \textit{isometry group} of $\Omega$, and acts on $\Omega$ by $g \cdot x = g(x)$. For any $\mu \in \Omega$, the \textit{isotropy group} of $\mu$ is the subgroup $G_\mu$ of $\t{Iso}(\Omega)$ of isometries mapping $\mu$ to itself, $G_\mu = \eset{ g \in \t{Iso}(\Omega): g \cdot \mu = \mu }$. A relevant property of the Fréchet mean is that it is invariant under the action of an isometry and as the next proposition shows, the isotropy group of the Fréchet mean of a random variable $X$ preserves all the distance moments of $X$.

\begin{proposition}\label{thm-moments-pres}
  Let $(\Omega, d)$ be a metric space, $p \geq 2$ and $X \in L^p(\Omega)$ be a random variable in $\Omega$ with Fréchet mean $\mu$. Then, for any random isometry $\mathbf{g}$ independent of $X$ with support equal to a subgroup of $G_\mu$, the variable $\mathbf{g} \cdot X$ has the same Fréchet mean and distance moments as $X$, that is
  \begin{equation*}
    \expec{\mathbf{g} \cdot X} = \mu \qquad\t{and}\qquad \expec{d(\mathbf{g} \cdot X, \mu)^k} = \expec{d(X, \mu)^k},
  \end{equation*}
  for all $k \leq p$.
\end{proposition}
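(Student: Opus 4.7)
The two claims can be handled independently, starting with the distance moments. The key observation is pointwise: for any fixed $g \in G_\mu$, we have $g \cdot \mu = \mu$ and $g$ is an isometry, so $d(g \cdot X, \mu) = d(g \cdot X, g \cdot \mu) = d(X, \mu)$. Since $\mathbf{g}$ is supported in $G_\mu$, this identity holds almost surely, yielding $d(\mathbf{g} \cdot X, \mu) = d(X, \mu)$ a.s. Raising to the $k$-th power for $k \leq p$ and taking expectations immediately gives $\expec{d(\mathbf{g} \cdot X, \mu)^k} = \expec{d(X, \mu)^k}$; notably, this step does not even use the independence of $\mathbf{g}$ and $X$.

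For the Fréchet mean, the plan is to compute $F_{\mathbf{g}\cdot X}$ and show that $\mu$ is its unique minimizer. Using independence of $\mathbf{g}$ and $X$, I would condition on $\mathbf{g}$ and apply Fubini: for a fixed $g \in G_\mu$, the isometry property gives $d(g \cdot X, \omega) = d(X, g^{-1}\cdot \omega)$, hence
\begin{equation*}
F_{\mathbf{g}\cdot X}(\omega) = \expec{d(\mathbf{g}\cdot X, \omega)^2} = \expec{F_X(\mathbf{g}^{-1}\cdot \omega)}.
\end{equation*}
At $\omega = \mu$, using that $\mathbf{g}^{-1}$ also lies in $G_\mu$ (the support is a subgroup), the right-hand side collapses to $F_X(\mu)$. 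For $\omega \neq \mu$, each realization satisfies $\mathbf{g}^{-1}\cdot\omega \neq \mu$ almost surely --- otherwise $\omega = \mathbf{g}\cdot \mu = \mu$ --- so Assumption \ref{ass-separated} gives $F_X(\mathbf{g}^{-1}\cdot\omega) > F_X(\mu)$ a.s.; a strictly positive random variable has strictly positive expectation, producing $F_{\mathbf{g}\cdot X}(\omega) > F_{\mathbf{g}\cdot X}(\mu)$. Thus $\mu$ is the unique Fréchet mean of $\mathbf{g}\cdot X$.

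The main obstacle I anticipate is measure-theoretic bookkeeping around the random isometry rather than any substantive inequality: one must ensure that $(g, x) \mapsto g \cdot x$ is jointly measurable with respect to a suitable $\sigma$-algebra on $\t{Iso}(\Omega)$ so that $\mathbf{g}\cdot X$ is a well-defined random element of $\Omega$ and Fubini applies. This is routine when $\Omega$ is separable but warrants a brief remark. A secondary point is integrability of $F_X(\mathbf{g}^{-1}\cdot\omega)$ for the Fubini step: using the same isometry trick, $d(X, \mathbf{g}^{-1}\cdot\omega) \leq d(X, \mu) + d(\mu, \mathbf{g}^{-1}\cdot \omega) = d(X, \mu) + d(\omega, \mu)$, which is uniformly $L^2$ by the assumption $X \in L^p(\Omega)$ with $p \geq 2$.
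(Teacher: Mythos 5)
Your proposal is correct and follows essentially the same route as the paper's proof: the isometry change of variables $d(\mathbf{g}\cdot X,\omega)=d(X,\mathbf{g}^{-1}\cdot\omega)$, conditioning on $\mathbf{g}$ via independence, minimality of $F_X$ at $\mu$, and the pointwise identity $d(\mathbf{g}\cdot X,\mu)=d(X,\mu)$ for the moments. The only (minor) addition is that you invoke Assumption \ref{ass-separated} to get strict minimality of $F_{\mathbf{g}\cdot X}$ at $\mu$, i.e.\ uniqueness of the randomized Fréchet mean, whereas the paper's argument only exhibits $\mu$ as a minimizer.
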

\begin{proof}
  We will first prove that $\mu$ is the minimizer of the Fr\'echet function of $\mathbf{g} \cdot X$. Let $\omega \in \Omega$, we have using that $\mathbf{g}$ is an isometry that
  \begin{align*}
    \expec{d(\mathbf{g} \cdot X, \omega)^2}
    &= \expec{d((\mathbf{g}^{-1} \circ \mathbf{g}) \cdot X, \mathbf{g}^{-1} \cdot \omega)^2}\\
    &= \expec{d(X, \mathbf{g}^{-1} \cdot \omega)^2}\\
    &= \expect{\mathbf{g}}{ \expect{X}{ d(X, \mathbf{g}^{-1} \cdot \omega)^2 } }
  \end{align*}
  Since $\mu$ minimizes $\omega \mapsto \expec{d(X, \omega)^2}$, we have that the inner expectation can be lower bounded by $\expec{d(X, \mu)^2}$. Using this bound and reversing the above computations we get
  \begin{align*}
    \expec{d(\mathbf{g} \cdot X, \omega)^2}
    &\geq \expec{d(X, \mu)^2} \\
    &= \expec{d(\mathbf{g} \cdot X, \mathbf{g} \cdot \mu)^2} \quad&\text{(by isometry)}\\
    &= \expec{d(\mathbf{g} \cdot X, \mu)^2}&\text{(since } \mathbf{g} \in G_\mu)
  \end{align*}
  Therefore, $\expec{d(\mathbf{g} \cdot X, \mu)^2} \leq \expec{d(\mathbf{g} \cdot X, \omega)^2}$ for all $\omega \in \Omega$ proving the mean invariance $\expec{\mathbf{g} \cdot X} = \mu$. The claim concerning the distance moments is also a direct consequence of the fact that $\mathbf{g} \in G_\mu$,
  \begin{equation*}
    \expec{d(\mathbf{g}\cdot X, \mu)^k} = \expec{d((\mathbf{g}^{-1} \circ \mathbf{g})\cdot X, \mathbf{g}^{-1}\cdot \mu)^k} = \expec{d(X, \mu)^k}.
  \end{equation*}
\end{proof}
Proposition \ref{thm-moments-pres} suggests that the isotropy group of the Fréchet mean of $X$ is a natural candidate for the set maps to use for the randomization procedure. Given a random variable $\mathbf{g} \sim P_\mathbf{g}$ over $G_\mu$, possibly with support equal to a subgroup of $G_\mu$, we define the \textit{isotropic randomized variable} $X^\star = \mathbf{g}\cdot X$.

An example of metric spaces with an isotropy group allowing for a randomization procedure similar to the one presented previously are the globally symmetric spaces, a certain kind of Riemmanian manifolds in which the reflection map acts as an isomorphism.

\begin{example}[Globally symmetric spaces]
  Let $M$ be a Riemannian manifold; $M$ is called \textit{Riemannian globally symmetric} if each $p \in M$ is an isolated fixed point of a self-inverse isometry $s_p$, see \cite[Chapter IV]{helgason_differential_2001}. Hence the definition of a symmetric space directly corresponds to assuming that for any $\mu \in M$, the set $R_\mu = \eset{\t{id}, s_\mu}$ is a subgroup of the isotropy group $G_\mu$. Examples of globally symmetric spaces include the sphere $S^d$ and the hyperbolic space $\mathbb{H}^d$. In these spaces, a randomization procedure can thus be constructed by sampling $\mathbf{g}$ uniformly from $R_\mu$.
\end{example}

\subsection{Testing via isotropic randomization}

Proposition \ref{thm-moments-pres} ensures that under the null hypothesis $H_0: \expec{X} = \mu$, the randomized random variable $X^\star$ has the same Fréchet mean and distance moments as $X$. Under the alternative hypothesis $H_1: \expec{X} \neq \mu$, we can expect that transforming $X$ via the isotropy group of $\mu$ will not preserve the Fréchet mean and distance moments of $X$. This, together with the following theorem, suggests that the empirical Fréchet variance of $X^\star$ can be used as a test statistic for the hypothesis \eqref{eq-hypothesis}.

\begin{theorem}[Theorem 1 in \cite{dubey_frechet_2019}]\label{thm-clt-var}
  Suppose that the metric space $(\Omega, d)$ is bounded and satisfies Assumption \ref{ass-cover}. If additionally the random variable $X \in \Omega$ satisfies Assuption \ref{ass-separated}, then
  \begin{equation}
    \sqrt{n}(\hat V_n - \Var{X}) \rightarrow \mathcal{N}(0, \sigma^2_d)
  \end{equation}
  where $\sigma^2_d = \Var{d(X, \expec{X})^2}$
\end{theorem}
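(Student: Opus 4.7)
The plan is to decompose $\hat V_n - \Var{X}$ into an oracle term that satisfies the claimed CLT directly and a remainder, then show the remainder is negligible on the $\sqrt{n}$ scale. Introduce the oracle
\begin{equation*}
\tilde V_n = \frac{1}{n}\sum_{i=1}^n d(X_i, \mu)^2.
\end{equation*}
Since $\Omega$ is bounded, the summands are iid bounded real random variables with mean $\Var{X}$ and variance $\sigma^2_d$, so the classical univariate CLT yields $\sqrt{n}(\tilde V_n - \Var{X}) \rightarrow \mathcal{N}(0, \sigma^2_d)$. By Slutsky, it then suffices to prove $\sqrt{n}(\hat V_n - \tilde V_n) \to 0$ in probability.

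The difference $R_n := \hat V_n - \tilde V_n = F_n(\hat\mu_n) - F_n(\mu)$, with $F_n(p) = \frac{1}{n}\sum_i d(X_i, p)^2$, admits the decomposition
\begin{equation*}
R_n = \bigl[G_n(\hat\mu_n) - G_n(\mu)\bigr] + \bigl[F(\hat\mu_n) - F(\mu)\bigr],
\end{equation*}
where $G_n(p) := F_n(p) - F(p)$ is the centered empirical process and $F(p) = \expec{d(X, p)^2}$ is the population Fréchet function. The first bracket is a local oscillation of $G_n$; the second is nonnegative since $\mu$ minimizes $F$. Since $R_n \leq 0$ by minimality of $\hat\mu_n$, the nonnegative second term is dominated by the first in absolute value, giving
\begin{equation*}
|R_n| \leq 2 \sup_{p \in B_{\delta_n}(\mu)} |G_n(p) - G_n(\mu)|, \qquad \delta_n := d(\hat\mu_n, \mu).
\end{equation*}

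To conclude I would first use Assumption \ref{ass-separated} together with uniform convergence of $F_n$ to $F$ (itself a Glivenko--Cantelli consequence of the bounded-diameter class $\{d(\cdot,p)^2 : p \in \Omega\}$ under the second half of Assumption \ref{ass-cover}) to obtain consistency $\delta_n = o_P(1)$. The first half of Assumption \ref{ass-cover} then provides a local maximal inequality: the entropy integral over $B_\delta(\mu)$ tends to zero with $\delta$, so that
\begin{equation*}
\sqrt{n}\, \expec{\sup_{p \in B_\delta(\mu)} |G_n(p) - G_n(\mu)|} \leq \phi(\delta), \qquad \phi(\delta) \to 0.
\end{equation*}
A standard peeling or ``localize-at-a-shrinking-radius'' argument then transfers this bound from a fixed $\delta$ to the random $\delta_n$, yielding $\sqrt{n}|R_n| = o_P(1)$, after which Slutsky closes the proof.

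The main obstacle lies precisely in this last transfer, where a uniform bound on a fixed ball must be promoted to a rate for the data-dependent radius $\delta_n$. The standard resolution is two-step: first bootstrap the crude consistency $\delta_n = o_P(1)$ into a preliminary polynomial rate (typically $\delta_n = O_P(n^{-1/4})$) by balancing the modulus $\phi(\delta)/\sqrt{n}$ against $F(\hat\mu_n) - F(\mu)$, and then apply the localized modulus inequality on the shrinking event $\{\delta_n \leq c_n n^{-1/4}\}$. All other ingredients --- the CLT for bounded iid averages, Slutsky, and the entropy-based consistency of $\hat\mu_n$ --- are essentially routine M-estimator machinery from \cite{vaart_asymptotic_1998}; the delicate work is entirely in controlling this local empirical-process modulus under the entropy integral of Assumption \ref{ass-cover}.
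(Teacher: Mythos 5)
The paper does not prove this result; it is imported verbatim as Theorem 1 of \cite{dubey_frechet_2019}, so there is no internal proof to compare against. Your sketch is a correct reconstruction of the argument used in that source: the oracle decomposition $\hat V_n = \tilde V_n + R_n$ with $\tilde V_n = n^{-1}\sum_i d(X_i,\mu)^2$, the classical CLT for the bounded iid summands, and the empirical-process control of $\sqrt{n}R_n$ via consistency of $\hat\mu_n$ and the local entropy condition is exactly the route taken there, and every step you propose goes through.

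One remark: the difficulty you flag at the end --- needing a preliminary polynomial rate for $\delta_n = d(\hat\mu_n,\mu)$ and a peeling argument to handle the random radius --- is not actually present for this statement. Since you only need $\sqrt{n}\,|R_n| = o_P(1)$ rather than a rate, it suffices to fix $\delta>0$, write
\begin{equation*}
\Prob{\sqrt{n}\,|R_n| > \varepsilon} \leq \Prob{\delta_n > \delta} + \varepsilon^{-1}\phi(\delta),
\end{equation*}
send $n\to\infty$ using consistency, and then let $\delta\to 0$ using $\phi(\delta)\to 0$. The localization at a shrinking, data-dependent radius (and hence the $O_P(n^{-1/4})$ bootstrap) is only required when one wants a CLT or rate for $\hat\mu_n$ itself, not for the Fr\'echet variance. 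Your more elaborate route would also work, but it is unnecessary machinery here.
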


Since our goal is to construct a test based on the empirical variance of $X^\star$, we also need to make sure that it will have power against alternatives to the null hypothesis.

Let us consider two simple cases to illustrate where the variance alone might not be sufficient. First, let $\Omega$ be the Euclidean space $\R^2$; here, for any $x \in \R^2$ the isotropy group $G_x$ is given by the rotations and reflections, up to a conjugation with the translation to $x$. For any direction $x \in \R^2$, the reflection $r_x$ with respect to the line spanned by $x$ is $r_\mu(x) = S_\mu x$ with $S_\mu = 2\frac{\mu\mu^\top}{\mu^\top\mu} - \t{id}_2$. As a reflection $r_\mu$ is a self-inverse isotropy of $x$, and hence the set $R_x = \eset{ \t{id}, r_x }$ is a subgroup of $G_x$. For ease of computation and implementation we might consider restricting the support of the randomization group action to $R_\mu$ where $\mu$ is the tested mean. However now $\mu \in \R^2$ and $X \sim \mathcal{N}(0, \t{id}_2)$ be the standard normal distribution in $\R^2$. By elementary properties of the multivariate normal distribution, $\Var{g \cdot (X + \lambda \mu)} = \Var{g \cdot (X + \mu)}$ for all $\lambda \in \R$ and $g \in R_\mu$. Hence, the variance alone does not have power against colinear alternatives when restricting the support of $\mathbf{g}$ too much.

\begin{figure}[!t]
  \hspace*{-.8cm}
  \centering
  \begin{tikzpicture}
    \begin{scope}[shift={(-3.5, 0)}]
      \draw[dotted, rounded corners=4pt] (0, 0) -- ++(6, 0) -- ++(0,-9.3) -- ++(-6, 0) -- cycle;
      \node (Aa) at (2.95, -2.5) {\includegraphics[width=5.5cm]{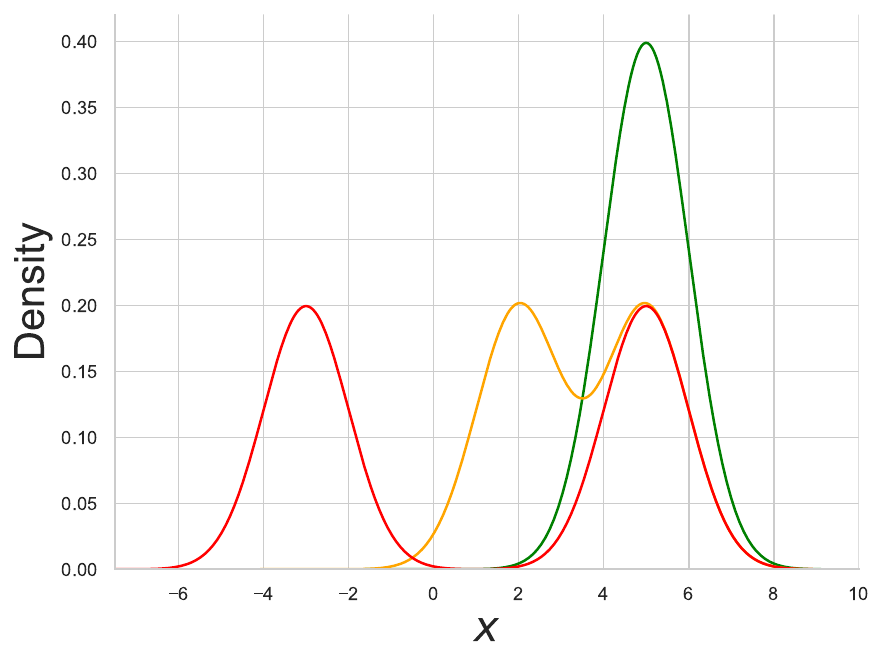}};
      \node (Ab) at (2.95, -6.7) {\includegraphics[width=5.5cm]{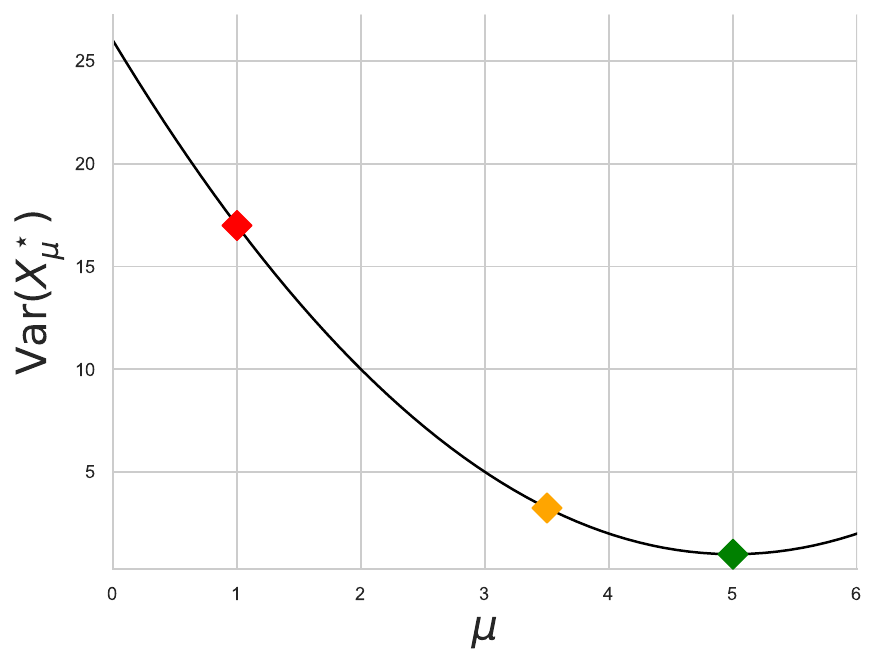}};
      \node (Ac) at (3.25, -9) {\includegraphics[width=8cm]{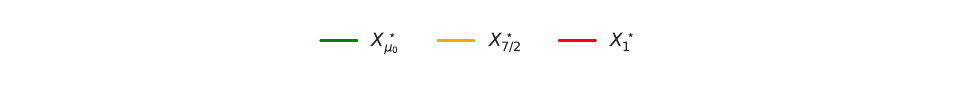}};
      \node[fill=white] (aa) at (0.2, -0.2) {\small\textbf{a}};
      \draw (0.4, 0) -- ++(0,-.4) -- ++(-.4,0) {[rounded corners=4pt]-- ++(0,.4)} -- cycle;
    \end{scope}

    \begin{scope}[shift={(3.5, 0)}]
      \draw[dotted, rounded corners=4pt] (0, 0) -- ++(6, 0) -- ++(0,-9.3) -- ++(-6, 0) -- cycle;
      \node (Ba) at (2.95, -2.5) {\includegraphics[width=5.5cm]{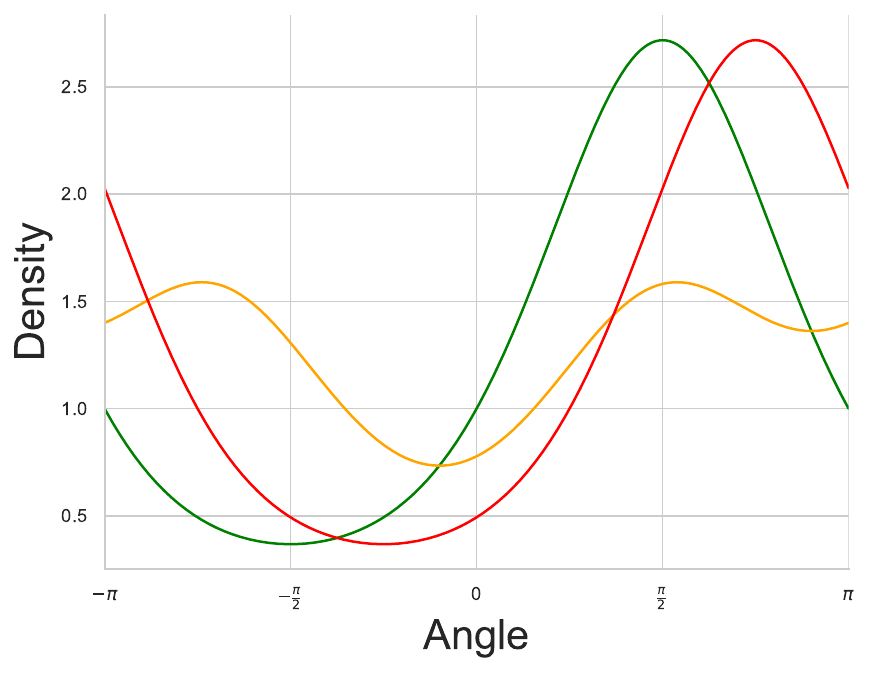}};
      \node (Bb) at (2.95, -6.7) {\includegraphics[width=5.5cm]{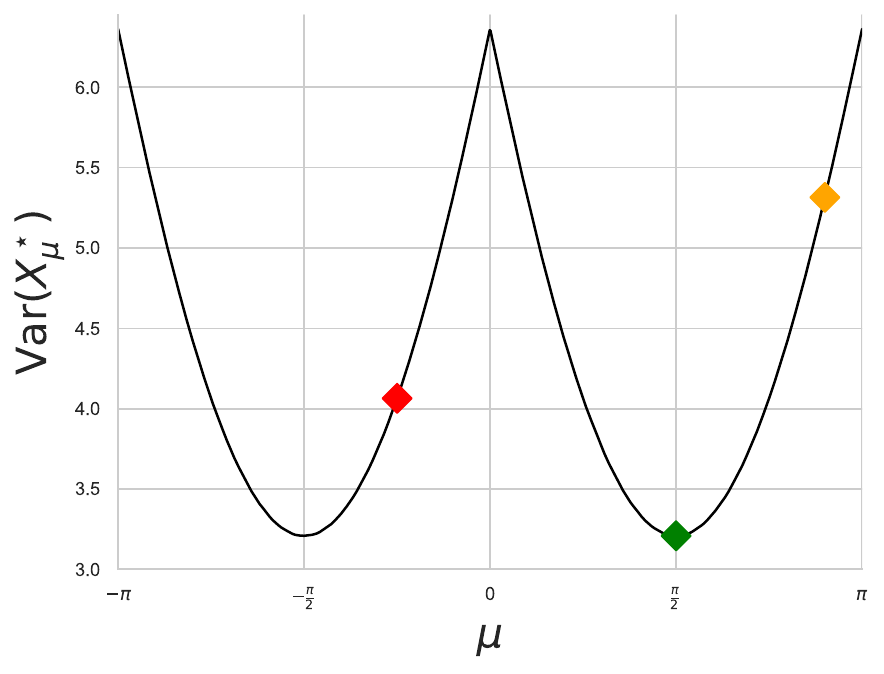}};
      \node (Bc) at (3.25, -9) {\includegraphics[width=8cm]{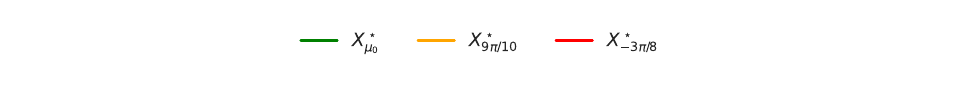}};
      \node[fill=white] (bb) at (0.2, -0.2) {\small\textbf{b}};
      \draw (0.4, 0) -- ++(0,-.4) -- ++(-.4,0) {[rounded corners=4pt]-- ++(0,.4)} -- cycle;
    \end{scope}
  \end{tikzpicture}

  \caption{Illustration of the behavior of a random variable and its variance under the reflective randomization proposed in \eqref{eq-reflection-circle-exp}. In Panel (a), the random variable follows a normal distribution $N(5,1)$ and is randomized around its true mean (green) as well as other hypothesized points $\mu$ on the real line (orange, red). Panel (b) considers a random variable on the circle $S^1$ following  a von Mises distribution $VM(\pi/2, 0.3)$. As in panel (a), we show the result of the randomization with respect to the true mean (green) as well as randomizations with respect to other points $\mu$ on the circle (orange, red). In both panels, the top plot displays the probability density functions of each variable, while the bottom plot shows their Fréchet variance as a function of the randomization point $\mu$. The colored points on the variance curves correspond to the specific distributions displayed above.}\label{fig:randomization}

\end{figure}

While the previous example might seem artificial, there exist spaces in which the entire isotropy group in a point does not identify the point itself. For instance on the example of the circle $S^1$ considered above, the isotropy group in any point $x \in S^1$ is $G_x = \eset{ \t{id}, g_x }$, where $g_\mu$ is the angular reflection given in Equation \eqref{eq-reflection-circle}. There, we have that the antipodal point $x^-$ of $x$ has the same isotropy group, $G_{x^-} = G_x$. Thus, whether the randomization is done to test the null $H_0 : \expec{X} = x$ or $H_0 : \expec{X} = x^-$, the testing procedure will randomize the data in the exact same way making the two situations indistiguishable. This behavior on the circle is illustrated in Figure \ref{fig:randomization} where we compare the behavior of the randomization in the simple case of the real line to the circle in which points are not fully identified by their isotropy group. The same argument can be extended to higher dimensional spheres where using the variance of the randomized distribution will not have power against antipodal alternatives. 

These two examples motivate the notion of an \textit{admissible randomization} for constructing a test based only on the empirical Fréchet variance. In simple words, an admissible randomization should make it possible to distinguish between the null hypothesis and any other point in the space.

\begin{definition}
  Let $(\Omega, d)$ be a metric space and $\mu \in \Omega$. A random group action $\mathbf{g}$ over $\Omega$ is called $\mu-$admissible randomization if the support of $\mathbf{g}$ is a subgroup of isotropy group $G_\mu$ and there exists no points $x \in \Omega \backslash \eset{\mu}$ such that $\mathbf{g} \cdot x$ is almost surely constant.
\end{definition}

With this definition, the following proposition shows that the admissible randomizations result in a test with power against alternatives.

\begin{algorithm}[t]
  \caption{Isotropic Randomization Test for Fréchet Mean}\label{alg:iso-rand}
  \begin{algorithmic}[1]
  \Require Sample $X_1,\ldots,X_n$, hypothesized mean $\mu$, randomization distribution $P_{\mathbf{g}}$, number of replicates $B$, significance level $\alpha$
  \For{$b = 1,\ldots,B$}
      \State Sample independently $\mathbf{g}_1,\ldots,\mathbf{g}_n \simiid P_{\mathbf{g}}$
      \State Construct randomized sample $X^\star_i = \mathbf{g}_i \cdot X_i$ for $i=1,\ldots,n$
      \State Compute test statistic $V_b = \hat{V}_n(X^\star_1,\ldots,X^\star_n)$
  \EndFor
  \State Sort the test statistics: $V_{(1)} \leq \cdots \leq V_{(B)}$
  \State \Return Reject $H_0$ if $\hat{V}_n(X_1,\ldots,X_n) > V_{(\floor{\alpha B})}$
  \end{algorithmic}
  \end{algorithm}

\begin{proposition}
  Let $X \in L^2(\Omega)$ be a random variable satisfying Assumption \ref{ass-separated}. For any $\mu \in \Omega$ and $\mu-$admissible randomization $\mathbf{g}$, let $X^\star = \mathbf{g} \cdot X$. Then, $\Var{X^\star} \geq \Var{X}$ with equality if and only if $\expec{X} = \mu$.
\end{proposition}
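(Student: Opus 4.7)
The plan is to compare the Fréchet functions of $X^\star$ and $X$ via a conditioning argument. Writing $\mu_X = \expec{X}$, for any $\omega \in \Omega$, conditioning on $\mathbf{g}$ and applying the isometry identity from the proof of Proposition \ref{thm-moments-pres} yields
\begin{equation*}
  F_{X^\star}(\omega) = \expect{\mathbf{g}}{\expect{X}{d(X, \mathbf{g}^{-1}\cdot\omega)^2}} = \expect{\mathbf{g}}{F_X(\mathbf{g}^{-1}\cdot\omega)}.
\end{equation*}
Since $F_X(\cdot) \geq F_X(\mu_X) = \Var{X}$ pointwise on $\Omega$, this gives $F_{X^\star}(\omega) \geq \Var{X}$ for every $\omega$, and taking the infimum over $\omega$ produces the inequality $\Var{X^\star} \geq \Var{X}$. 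The forward direction of the equality characterization is then immediate from Proposition \ref{thm-moments-pres}: if $\mu_X = \mu$, the proposition gives $\mu$ as the Fréchet mean of $X^\star$ together with $\expec{d(X^\star, \mu)^2} = \expec{d(X, \mu)^2}$, so $\Var{X^\star} = \Var{X}$.

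For the converse, I would argue by tracing back the chain of inequalities. Assume $\Var{X^\star} = \Var{X}$ and pick a minimizing sequence $\omega_n$ with $F_{X^\star}(\omega_n) \to \Var{X}$. From the identity above, $\expect{\mathbf{g}}{F_X(\mathbf{g}^{-1}\cdot\omega_n) - F_X(\mu_X)} \to 0$ with a nonnegative integrand, so passing to a subsequence $n_k$, the convergence $F_X(\mathbf{g}^{-1}\cdot\omega_{n_k}) \to F_X(\mu_X)$ holds on a set $A$ of full $P_\mathbf{g}$-measure. The well-separated minimum condition of Assumption \ref{ass-separated} then upgrades this to $d(\mathbf{g}^{-1}\cdot\omega_{n_k}, \mu_X) \to 0$, or equivalently $d(\omega_{n_k}, \mathbf{g}\cdot\mu_X) \to 0$, for every $\mathbf{g} \in A$. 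A triangle inequality between any two $\mathbf{g}_1, \mathbf{g}_2 \in A$ forces $\mathbf{g}_1 \cdot \mu_X = \mathbf{g}_2 \cdot \mu_X$, so $\mathbf{g} \cdot \mu_X$ is almost surely constant. Admissibility of $\mathbf{g}$ then rules out $\mu_X \neq \mu$.

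The main obstacle, and arguably the reason the definition of admissibility is phrased in terms of an almost sure constant rather than a single isometry, is that the Fréchet mean of $X^\star$ is not assumed to exist, so one cannot simply evaluate the Fréchet function at a single minimizer and read off the equality case. Working with a minimizing sequence forces a two-step passage from values to arguments: first using the well-separated condition to convert closeness of $F_X$-values to closeness in $\Omega$, then using the triangle inequality to collapse the orbit $\eset{\mathbf{g}\cdot\mu_X : \mathbf{g} \in A}$ to a single point. Everything else is a routine computation once this bridge is in place.
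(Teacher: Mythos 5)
Your proof is correct, and the core of it --- the isometry identity $d(\mathbf{g}\cdot X,\omega)=d(X,\mathbf{g}^{-1}\cdot\omega)$, conditioning on $\mathbf{g}$, and lower-bounding the inner expectation by $F_X(\mu_X)$ --- is exactly the paper's computation. Where you genuinely diverge is the equality case. The paper evaluates the Fréchet function of $X^\star$ directly at $\mu^\star=\expec{X^\star}$, so that equality forces $\expect{\mathbf{g}}{F_X(\mathbf{g}^{-1}\cdot\mu^\star)-F_X(\mu_X)}=0$ and hence $\mathbf{g}^{-1}\cdot\mu^\star=\mu_X$ almost surely by uniqueness of the minimizer; admissibility then finishes as in your argument. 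This is shorter but implicitly presupposes that $X^\star$ has a Fréchet mean, which Assumption \ref{ass-separated} guarantees only for $X$ (and, via Proposition \ref{thm-moments-pres}, for $X^\star$ under the null --- precisely the case one is not in when proving the converse). Your minimizing-sequence argument closes that gap: you pay for it by genuinely needing the well-separatedness clause of Assumption \ref{ass-separated} to pass from convergence of $F_X$-values to convergence of the arguments $\mathbf{g}^{-1}\cdot\omega_{n_k}$, plus the triangle-inequality step collapsing the orbit $\eset{\mathbf{g}\cdot\mu_X : \mathbf{g}\in A}$ to a point, whereas the paper only needs uniqueness of the minimizer of $F_X$. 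Your version is the more careful one, and your closing remark correctly identifies why the almost-sure-constancy phrasing of admissibility is the right hypothesis to plug into at the end; the two proofs are otherwise interchangeable.
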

\begin{proof}
  Let $\mu^\star = \expec{X^\star}$ and $\mu_X = \expec{X}$. By definition of $X^\star$ and the admissibility of $\mathbf{g}$, we have that
  \begin{align*}
    \Var{X^\star}
    &= \expec{d(\mathbf{g} \cdot X, \mu^\star)^2}
    = \expec{d((\mathbf{g}^{-1} \circ \mathbf{g}) \cdot X, \mathbf{g}^{-1} \cdot \mu^\star)^2}\\
    &= \expec{d(X, \mathbf{g}^{-1} \cdot \mu^\star)^2}
    = \expect{\mathbf{g}}{ \expect{X}{ d(X, \mathbf{g}^{-1} \cdot \mu^\star)^2 } }
  \end{align*}
  By definition of the Fréchet mean, we have that $\expec{d(X, \omega)^2} \geq \expec{d(X, \mu_X)^2}$ for all $\omega \in \Omega$ and hence
  \begin{equation*}
    \Var{X^\star} \geq \expec{d(X, \mu_X)^2} = \Var{X},
  \end{equation*}
  Since $\expec{d(X, \mu_X)^2}$ lower bounds the Fréchet function and by Assumption \ref{ass-separated} the Fréchet mean is well separated, equality holds if and only if $\mathbf{g}^{-1} \cdot \mu^\star = \mu_X$, or equivalently $\mu^\star = \mathbf{g} \cdot \mu_X$ holds almost surely. Since $\mathbf{g}$ is $\mu-$admissible, this can only hold if $\mu_X = \mu$.
\end{proof}

Hence, if an admissible randomization exists, the empirical variance $\hat V_n$ defined in \eqref{eq-var-emp} can be used as a test statistic for the hypothesis \eqref{eq-hypothesis}. We formalize this testing procedure in Algorithm \ref{alg:iso-rand}.

The practical implementation of Algorithm \ref{alg:iso-rand} requires constructing isometries that form an admissible randomization. While these arise naturally in globally symmetric spaces, more complex spaces that lack global symmetry require careful consideration of the local geometry. We now examine an example of such a space --- a stratified space formed by gluing together Euclidean halfspaces, which naturally arises in applications involving branching structures or networks.

\begin{example}[Booklets]\label{ex-booklet}
  Let $k \in \N$, we define the $d$-dimensional booklet $B_d^k$ as $k$ copies of the halfspace $\R_+ \times \R^{d-1}$, glued together along $\eset{0} \times \R^{d-1}$. Each point is then represented by a triplet $(z, x, y)$ where $z \in \eset{1, \ldots, k}, x \in \R_+$ and $y \in \R^{d-1}$. The distance between two points $(z, x, y)$ and $(z', x', y')$ is given by
  \begin{equation*}
    d((z, x, y), (z', x', y'))^2 =
    \begin{cases}
      (x - x')^2 + \norm{y - y'}_2^2 & \text{if $z = z'$}, \\
      (x + x')^2 + \norm{y - y'}_2^2 & \text{otherwise}.
    \end{cases}
  \end{equation*}
  For a point $\mu = (z, x, y)$, one can construct a reflection map with two ingredients: an $y-$isotropy $g_y \in G_y$; a self-inverse permutation $\pi_z$ on $\eset{1, \ldots, k}$ such that $\pi_z$ maps $z$ to itself, $\pi_z z = z$. The reflection map $g_\mu$ is then defined as
  \begin{equation*}
    g_\mu(z', x',y') = (\pi_z z', x', g_y y').
  \end{equation*}
  An illustration of the booklet $B_2^4$ is shown in Figure \ref{fig:booklet}.
\end{example}

While no proof of the consistency of the test is available here, it is worth noting that the randomization procedure provides a form of consistency under the null hypothesis. Indeed, if $\mu$ is the Fréchet mean of $X$, and under the assumptions presented above, the empirical Fréchet mean is consistent with $d(\mu, \hat\mu_n) = o_p(1)$, then for any $g \in G_\mu$
\begin{equation*}
  d(\hat\mu_n, g \cdot \hat\mu_n) \leq d(\hat\mu_n, \mu) + d(\mu, g \cdot \hat\mu_n) = d(\hat\mu_n, \mu) + d(g^{-1} \cdot \mu, \hat\mu_n) = 2 d(\hat\mu_n, \mu).
\end{equation*}
Hence, a $n \rightarrow \infty$, the isotropy group $G_\mu$ almost fixes the empirical Fréchet mean, and the randomization procedure will almost fix the empirical Fréchet variance. This suggests that the variability introduced by the randomization disappears as the sample size grows under the null, and the randomization procedure will not introduce variability, leading to not rejecting the test.

The framework developed above provides a general approach for testing hypotheses about Fréchet means in metric spaces. A natural question is whether we can identify classes of distributions where this approach is particularly well-suited. Radially symmetric distributions, which we examine next, form such a class --- their inherent symmetry properties align naturally with the isometric randomization procedure, making them an ideal setting for applying these tests. Moreover, studying these distributions helps us better understand the relationship between geometric symmetry and statistical inference in metric spaces.

\subsection{Application to radially symmetric distributions}\label{subsect-radial}

Our framework is particularly well-suited for \textit{radially symmetric distributions} on metric spaces. We introduce a novel definition of radially symmetric distributions which generalizes the definition in Euclidean spaces. Recall that for $\mu \in \R^d$, a distribution $P_\mu$ with density $f$ with respect to the Lebesgue measure is called  radially symmetric around $\mu$ if its density can be written as $f(x) = h(d(x, \mu))$ for some $h : \R \rightarrow \R$. This definition naturally extends to Riemannian manifolds by assuming a radial density with respect to the Riemannian volume measure. A common example of such a distribution is the von Mises-Fisher distribution on the $d-$dimensional sphere $S^d$ with location and concentration parameters $\mu \in S^d$ and $\kappa > 0$. The density of this distribution with respect to the volume measure is $f_{\mu, \kappa}(x) \propto \exp(\kappa x^\top\mu) = \exp(\kappa d(x,\mu))$, which is invariant under isotropies of $\mu$. In homogeneous Riemannian manifolds, the Fréchet mean of a symmetric distribution is equal to its point of symmetry $\mu$, see \cite[Theorem 2]{mccormack_equivariant_2023}. In order to avoid having to work with the technicalities of generalizing the measure volume, we define the notion of radially symmetric distributions on metric spaces via the group action of the isometry group.

\begin{definition}
  A random variable $X$ on $\Omega$ is called \textit{radially symmetric} around $\mu \in \Omega$ if its distribution is invariant under the action of any $g \in G_\mu$. That is, $g \cdot X \stackrel{\mathcal{D}}{=} X$ for every $g \in G_\mu$.
\end{definition}

This definition of radially symmetric distributions is consistent with the previous one in Euclidean spaces. Indeed, consider a random variable $X \sim P_\mu$, radially symmetric around $\mu \in \R^d$, with density $f(x) = h(d(\mu, x))$ with respect to the Lebegues measure. For any $g \in G_\mu$ a change of variable argument shows that the random variable $g \cdot X$ has density $f_g(y) = h(d(g^{-1} \cdot y, \mu))$. By the fact that $g \in G_\mu$, this gives $f_g(y) = h(d(y, g(\mu))) = h(d(y, \mu))$ and hence $g \cdot X \stackrel{\mathcal{D}}{=} X$.

\begin{example}[Normal distribution on the spider]
  Consider the booklet with $d=1$ and $k \in \N$ branches defined in Example \ref{ex-booklet}, also called a \textit{spider}, constructed by gluing together $k \in \N$ copies of the real line through the origin. There, given a point $\mu = (z, x)$, a simple radially symmetric distribution can be constructed by considering mixture of normal distributions $P_\mu$ on the real line. Sample the branch index $Z$ uniformly from $\eset{1, \ldots, k}$ and sample $X \mid Z=z \sim N(x, 1)$ and $X \mid Z \neq z \sim N(0, 1)$. The isotropy group of $\mu$ is only composed of transformations that map the branch $z$ to itself and, when applied to the other branches, applies a permutation to the branch index and a reflection on the position. By construction, $P_\mu$ is invariant under all isotropies of $\mu$ and is thus symmetric in $\mu$.
\end{example}

It is not clear whether the Fréchet mean of a radially symmetric distribution is the point of symmetry as in the case of homogeneous Riemannian manifolds. This stems from the same reason as why some randomizations are not admissible: If two points share the same isotropy group, a random variable will either be radially symmetric around both points or neither. However, the following lemma shows that if the isotropy group of the point of symmetry of a symmetric distribution is not contained in the isotropy group of any other point of $\Omega$, the Fréchet mean of the distribution is the point of symmetry.

\begin{lemma}
  Let $X \in L^2(\Omega)$ be a radially symmetric distribution around $\mu \in \Omega$ with Fréchet mean $\mu_X$. If the isotropy group $G_\mu$ is not contained in the isotropy group of any other point in $\Omega$, then $\mu_X = \mu$.
\end{lemma}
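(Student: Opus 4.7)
The plan is to combine the isometry-equivariance of the Fréchet mean with the radial symmetry of $X$ to show that every $g \in G_\mu$ fixes $\mu_X$, i.e., $G_\mu \subseteq G_{\mu_X}$, and then invoke the containment hypothesis to conclude $\mu_X = \mu$.

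First, I would establish the identity $F_X(g\cdot\omega) = F_X(\omega)$ for every $g \in G_\mu$ and every $\omega \in \Omega$. Since $g$ is a bijective isometry,
\begin{equation*}
  F_{g\cdot X}(g\cdot\omega) = \expec{d(g\cdot X,\, g\cdot\omega)^2} = \expec{d(X,\omega)^2} = F_X(\omega).
\end{equation*}
Radial symmetry of $X$ around $\mu$ gives $g \cdot X \stackrel{\mathcal{D}}{=} X$, so $F_{g\cdot X} = F_X$, yielding $F_X(g\cdot\omega) = F_X(\omega)$. In other words, $F_X$ is invariant under the action of $G_\mu$.

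Specializing $\omega = \mu_X$, both $\mu_X$ and $g\cdot\mu_X$ minimize $F_X$. The lemma implicitly assumes that $\mu_X$ is the unique Fréchet mean (it is referred to as ``the Fréchet mean''), so $g\cdot\mu_X = \mu_X$ for every $g \in G_\mu$. Equivalently, $G_\mu \subseteq G_{\mu_X}$. By the hypothesis that no point other than $\mu$ has isotropy group containing $G_\mu$, this forces $\mu_X = \mu$.

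The proof is essentially formal once the equivariance identity is in place; the only conceptual hinge is uniqueness of the Fréchet mean, which is needed to pass from ``$\mu_X$ and $g\cdot\mu_X$ are both minimizers'' to ``$g\cdot\mu_X = \mu_X$''. This is the step I would flag as the main subtlety: if one dropped uniqueness, one would only obtain that $G_\mu$ permutes the set of Fréchet means, and one would need the stronger statement that this set coincides with $\{\mu\}$, which would require an additional argument (e.g.\ showing the set of minimizers is a single $G_\mu$-orbit reducing to $\{\mu\}$ by the non-containment hypothesis). Given the phrasing of the lemma, though, uniqueness is assumed, and the argument concludes cleanly.
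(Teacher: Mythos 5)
Your proof is correct and follows essentially the same route as the paper: both arguments show that every $g \in G_\mu$ fixes $\mu_X$, hence $G_\mu \subseteq G_{\mu_X}$, and then invoke the non-containment hypothesis. The only difference is that the paper cites the isometry-equivariance of the Fréchet mean from the literature as a black box, whereas you derive it directly from the invariance of the Fréchet function under $G_\mu$; your observation that uniqueness of the minimizer is the hinge of the argument is accurate and applies equally to the cited equivariance result.
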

\begin{proof}
  This is a direct consequence of the equivariance of the Fréchet mean described in \cite{mccormack_equivariant_2023}: for any $g \in \t{Iso}(\Omega)$, we have that $\expec{g \cdot X} = g \cdot \expec{X}$. Hence, by symmetry of $X$ around $\mu$ we have for all $g \in G_\mu$ that $g \cdot X \stackrel{\mathcal{D}}{=} X$ and hence $\expec{g \cdot X} = \mu_X$. Therefore by equivariance $\mu_X = \expec{g \cdot X} = g \cdot \mu_X$ implying that $g \in G_{\mu_X}$ and hence $G_\mu \subset G_{\mu_X}$. Since $G_\mu$ is not contained in the isotropy group of any other point, this implies that $\mu_X = \mu$.
\end{proof}

\section{Numerical experiments}\label{sec-numerical}

We now illustrate our theoretical results with numerical experiments taking place in metric spaces with different properties and challenges for the test. We consider the example of the circle, with a both a symmetric distribution and a non-symmetric mixture distribution. In the second example, we explore the behavior of the test on a bounded subspace of the $2-$dimensional booklet described in Example \ref{ex-booklet}. Finally, we consider the space of symmetric positive definite (SPD) matrices equipped with the Bures-Wasserstein distance.

In each scenario, we consider a fixed $\mu_0 \in \Omega$ to test for and generate datasets of size $n \in \eset{100, 200, 400}$ with true Fr\'echet mean $\mu$, where the values of $\mu$ are elements on the geodesic ray connecting $\mu_0$ and a chosen $\mu_1 \neq \mu_0$. This allows us to consider the performance of the test as a function of the distance between $\mu_0$ and $\mu$. Additionally, we evaluate the power of the test as a function of $n$ for $n \in \eset{100, 200, 400, 600}$. For each metric space, we generate $500$ datasets. For situation, we run the isotropic test at level $\alpha = 0.05$ with $B = 1000$ resampling replicates and record the rejection rate.

All simulations and analyses are done in Python. The code to reproduce the experiments and figures is available online\footnote{\url{https://github.com/matthieubulte/meantesting}}.

\subsection{Directional data}\label{subsec-num-circle}
\input{circle_fig}
\input{circle_mix_fig}

In the first experimental setup, we consider the space of directions on the circle $\Omega = S^1$. We generate data from two different distributions: a von Mises distribution $VM(0, 0.3)$ and a mixture $P = (1 - p)VM(\pi/2, 0.3) + p VM(0,0.3)$ with mixing proportion $p = 1/3$. In both cases, for a given tested Fréchet mean $\mu_0$, we construct the randomization based on the subgroup of $G_{\mu_0}$ consisting of the identity and the reflection map described in Equation \eqref{eq-reflection-circle} and sample the isotropy $\mathbf{g}$ uniformly over this subgroup. In terms of the randomized sample, this corresponds with equal probability to either the original data point or the data point reflected around $\mu_0$. We compare out test results to the score test \cite[Chapter 7]{mardia2009directional} which uses the approximate normality of the circular mean.

In the first case, $VM(0, 0.3)$ is radially symmetric around its mean $\pi/2$, as discussed in Section \ref{subsect-radial}. The alternativc hypotheses $H_1$ are expressed in terms of the angle difference $\delta$ with $\mu_\delta = \expec{X} + \delta = \delta$. For testing the power of the test against local alternative, we sample according to the von Mises distribution with mean $0.2  n^{-1/2}$ for different $n$ considered. The visualization of the data, sample and resulting size and power are shown in Figure \ref{fig:circle-vm}. Panel (b) shows that the test appears to be consistent: it has correct size $0.05$ under the null hypothesis ($\delta = 0$) and rapidly converges to 1 as $\abs{\delta}$ increases. While not proved in the previous sections, Panel (c) shows that the test has power against $\sqrt{n}-$alternatives in this scenario. Omitted here is the performance of the score test which is expected considering that the sampling distribution is a von Mises.

In the second case of the mixture of von Mises, the true Fr\'echet mean of is $\pi/3$ but the unequal weighting of the two mixture components renders the distribution non-symmetric around its mean. Here, alternative distributions are generated by shifting the mean of both components of the mixture, giving $P_\delta = (1 - p)VM(\pi/2 + \delta, 0.3) + p VM(\delta,0.3)$. The data space and distribution along with experiment results are displayed in Figure \ref{fig:circle-mix}. While the distribution is not symmetric, the experimental results show a similar behavior as in the symmetric case, and the test has both correct size under the null and power against local alternatives. While we do not illustrate it here, we found that the score test in this scenario appears to be strongly biased and fails to maintain size or power for any of the tested sample sizes. The difference to the previous scenario is likely due to the mixture distribution.

\subsection{Bures-Wasserstein distance on SPD matrices}\label{subsec-num-cov}
\input{bw_fig}

We now consider the space of $2\times 2$ SPD matrices $\Omega = \mathbb{S}^2_{+}$ equipped with the Bures-Wasserstein distance, described in Example \ref{ex-bures}. To construct a sample with a given a mean $\mu \in \mathbb{S}^2_{+}$ on the Bures-Wasserstein manifold, we first sample a random element in the tangent space $T_\mu$. The tangent space at a point $\mu \in \mathbb{S}^2_{+}$ is the set of symmetric matrices \cite{bhatia_riemannian_2006}, so we sample the element of the tangent space in $\mu$ by first sampling a $2\times 2$ matrix $M$ with i.i.d.\,standard normal entries and symmetrize it via $V = (M + M^\top) / 2$. The sample point is then constructed by applying the exponential map, giving $X = \exp_\mu V$. An explicit form of the exponential map can be written in terms of the Lyapunov operator \cite{malago_wasserstein_2018}. In the examples, we chose $\mu_0 = \mathbbm{1}_2$ giving closed forms for the exponential and logarithm maps, $\exp_{\mu_0} V = (V/2 + \mathbbm{1}_2)(V/2 + \mathbbm{1}_2)$ and $\log_{\mu_0} M = 2M^{1/2} - 2 \mathbbm{1}_2$. Similarly to the previous example, the randomization is done by sampling $\mathbf{g}$ uniformly over $\eset{\t{id}, g_{\mu_0}}$ where $g_{\mu_0}$ is the reflection defined in Equation \eqref{eq-reflection-circle-exp} in terms of the exponential and logarithm maps.

To assess the power and size of our test, we sample under alternative distributions generated by sampling from the same process, with a Fréchet mean $\mu_\delta$ lying on the geodesic ray passing through the identity matrix $\mu_0 = \mathbbm{1}_2$ and $\mu_1 = \big(\begin{smallmatrix}4 & 1\\ 1 & 3 \end{smallmatrix}\big)$, where $\delta \in [-1, 1]$. A similar display of the experiment as in the previous numerical experiments is found in Figure \ref{fig:bw}. The experiment shows that the isotropic test is able to detect the change in the mean of the distribution for $\abs{\delta} > 0$ and maintain level, corresponding to $\delta = 0$. While the power increases with sample size for a fixed alternative, shown in Panel (c), our numerical results suggest that the test lacks power against $\sqrt{n}$-local alternatives (not shown).

\subsection{Booklet}
\input{booklet_fig}

In the final experiment, we consider the booklet $B^4_2$,  described in Example \ref{ex-booklet}. As a reminder, the space $B^4_2$ is constructed by gluing together four \textit{branches} (copies of $\R_+$) via the origin $0$, a $4-$spider, and attaching to each point on this structure a copy of the real line $\R$. Hence each point in $B^4_2$ can be represented via three coordinates $(z, x, y)$: the index of the branch $z \in \eset{1, 2, 3, 4}$, the position on the branch $x \in \R_+$, and the position on the real line $y \in \R$. A visualization of this space can be found in Panel (a) of Figure \ref{fig:booklet}.

To sample data from $B^4_2$, we consider the following hierarchical model where the distribution of $X$ is determined by the branch $Z$ on which the point lies and $Y$ is independent of $X$ and $Z$,
\begin{equation}\label{eq-booklet-distrib}
  \begin{aligned}
    Z &\sim \t{Categorical}\left(\frac{4}{7}, \frac{1}{7}, \frac{1}{7}, \frac{1}{7}\right)\\
    Y \mid Z &\sim \mathcal{N}(1, 1)\\
    X \mid Y, Z &\sim
    \begin{cases}
      \t{Beta}(20, 5) \qquad &\t{if}\, Z = 1,\\
      \t{Beta}(5, 20) \qquad &\t{if}\, Z \in \eset{2, 3, 4}
    \end{cases}.
  \end{aligned}
\end{equation}
Minimizing the Fréchet function over these distributions yields that the Fréchet mean of this distribution has $\mu^z = 1$, $\mu^x = 0$ and $\mu^y = 1$. The random isotropy $\mathbf{g}$ is chosen as described in Example \ref{ex-booklet}. First, a random permutation $\mathbf{g_z}$ over $\eset{1, 2, 3, 4}$ is drawn uniformly over the set of randomizations with fixed point $\mu_z$. Given the geometry of the spider, there is no isotropy on the spider that modifies the $x$ component. The $y$ component is changed via $\mathbf{g_y}$ uniformly sampled over $\eset{\t{id}, g_{\mu_y}}$ where $g_{\mu_y}$ is the reflection on the real line defined in Section \ref{subsec-realline}. All in all, we get that for a random $X \in B^4_1$, its randomization is given by $\mathbf{g} \cdot X = (\mathbf{g_z}\cdot X_z, X_x, \mathbf{g_y}\cdot X_y)$.

In this scenario, there is no obvious way to change the sampling process in to obtain a desired mean. Thus, we keep the sampling process fixes and vary the null hypothesis considered instead. The test is run for $H_0: \mu = \mu_\delta$ where $\mu_\delta$ is on the geodesic between the true $\mu_0$ of the data generating process and $\mu_1 = (2, 1, 0)$. The results of the experiment are displayed in Figure \ref{fig:booklet}. Similarly to the previous experiments, we observe that the isotropic test is able to detect the change in the mean of the distribution for $\delta > 0$ and maintain level. Panel (c) suggests that the test here has power against $\sqrt{n}-$alternatives. 
\input{wind_fig.tex}
\section{Application: Wind in Western Denmark} \label{sec-realdata}

As an illustrative example, we analyze wind direction data from the Danish Meteorological Institute (DMI). We are interesting in testing whether the mean wind direction aligns with the documented south-westerly pattern in the region \cite{doi:10.1126/science.1169823}, $\mu_0 = 5\pi/4 = 225^\circ$. The data is obtained from the Blåvandshuk Fyr station, at the most western point of Denmark. The station records the wind direction hourly, with direction measured in degrees from North, represented by elements on the circle $S^1$. Each measurement consists of a single reading from a wind vane, with values ranging from 0 to 359 degrees, representing the direction from which the wind is blowing, meaning that a wind blowing from North to the South would correspond to a measurement of $\pi/2$. In order to reduce the dependence between the observations and to avoid systematic time-of-day trends, we only consider a subset of the data consisting of measurements taken at 12pm every day in from June to November 2024, resulting in 152 observations. 

The dataset is displayed in Panel (a) of Figure \ref{fig:wind}, where the wind directions appear to be evenly distributed around the empirical Fréchet mean $\hat\mu_n \approx 3.944 \approx 225.98^\circ$, very close to the hypothesized south-western direction. The proximity of the empirical Fréchet mean to the hypothesized mean $\mu_0$ provides initial qualitative evidence supporting the hypothesis. We further investigate whether the data is symmetrically distributed around $\mu_0$. Panel (b) shows kernel density estimates of the signed angles from $\mu_0$ to $X$ and $\mu_0$ to $g^{\mu_0} \cdot X$ (after reflection around $\mu_0$). The deviation between these densities indicates that $X$ is not invariant under isotropies of $\mu_0$, suggesting a deviation from radial symmetry around $\mu_0$.

We proceed to test the null hypothesis $H_0: \mu = \mu_0$ against the alternative $H_1: \mu \neq \mu_0$. We use the isotropic test specialized to the circle $S^1$ described in Section \ref{subsec-riemmanian} with $B = 1000$ randomizations. The result of the test is displayed in Panel (c) of Figure \ref{fig:wind}. The empirical CDF of the test statistic under randomization is displayed, along with the observed value of the statistic $T$ on the original sample. The observed value of the test statistic is $T \approx 1.743$, corresponding to an approximated p-value of $\hat p_n \approx 0.64$ which is not significant at the $5\%$ level. This result does not provide evidence against the null hypothesis, suggesting that the typical wind direction at Blåvandshuk Fyr is indeed following the south-westerly pattern. 

In comparison, we can consider different alternatives for testing this problem that one might consider. A first naive idea, based on running a standard $t$-test on the raw angles, estimates the mean angle at approximately $-25^\circ$ giving a statistic $t = -27$ and a $p$-value numerically equal to 0, thus rejecting the null hypothesis. This illustrates the importance of considering the circular nature of the data. A standard alternative in the circular data literature is the score test \cite[Chapter 7]{mardia2009directional} based on the asymptotic normality of the circular mean. This gives a $\chi^2_1$ statistic of approximately $0.003$ which corresponds to a $p$-value approximately equal to 0.046, rejecting the null hypothesis at level $\alpha = 0.05$. 

One can also consider a notion of confidence interval by considering the set of means that are not rejected by the test. We do this by finding an interval $\mathcal{C}_{(1-\alpha)}$ containing $\mu_0$ such that for all $\mu \in \mathcal{C}_{(1-\alpha)}$ the test does not reject the null hypothesis $H_0 : \expec{X} = \mu$ at level $\alpha$. We construct this interval by evaluating the test on a grid of 1000 equidistant points on the circle $S^1$ and finding the largest interval containing $\mu_0$ where the test does not reject the null hypothesis. In this case, we find the $\alpha=0.05$ level interval is $\mathcal{C}_{0.95} = (214^\circ, 242^\circ)$. Note that a full inversion of the test yields union of $(214^\circ, 242^\circ)$ and the antipodal points at $(34^\circ, 62^\circ)$ which is the result of the isotropies of antipodal points on the circle being equal, illustrating the phenomenon in the lower part of Figure \ref{fig:randomization}, Panel (b). 
\section{Conclusion} \label{sec-conclusion}

In this paper, we have developed a general framework for testing whether the Fr\'echet mean of a distribution on a metric space equals a hypothesized value. Our approach relies on constructing isotropic randomizations that preserve key geometric properties under the null hypothesis while having power to detect deviations from it. We began by examining the case of the real line, generalizing to the circle $S^1$ and further to Riemannian manifolds, where the exponential and logarithm maps provide natural tools for constructing such randomizations. Building on these insights, we extended the methodology to general metric spaces through the introduction of admissible reflections.

A key contribution is the characterization of admissible randomizations, which ensures that the resulting test has correct size as well as power against alternatives. We have shown that for radially symmetric distributions, our test is particularly well-suited as the randomization preserves distributional properties under the null hypothesis. Our numerical experiments across different metric spaces demonstrate that the test maintains the desired level while achieving good power against alternatives, even in finite samples. The application to wind direction data illustrates the practical utility of our approach in a real-world setting where traditional Euclidean methods are not applicable.

Several directions for future work emerge from this study. First, it remains to prove that the randomization scheme is consistent under the null hypothesis, which would provide a theoretical guarantee that the test has the correct size. Furthermore, as seen in Section \ref{sec-realdata}, i.i.d.\,data might not be available and investigating the behavior of the Fréchet mean and of the test under these conditions could be of interest. Additionally, the development of optimal randomization schemes, particularly for spaces where the isotropy group is rich, presents an interesting theoretical challenge. Finally, one could explore alternative test statistics, for instance based on considering the distance $d(\mu, \hat\mu_n)$.
\section{Acknowledgement} \label{sec-ack}

This work has received funding from the European Union’s Horizon 2020 research and innovation program under the Marie Skłodowska-Curie grant agreement No 956107, "Economic Policy in Complex Environments (EPOC)".

\appendix


\bibliographystyle{plain}
\bibliography{references}

\begin{thebibliography}{10}

\bibitem{bacak_computing_2014}
Miroslav Bačák.
\newblock Computing {Medians} and {Means} in {Hadamard} {Spaces}.
\newblock {\em SIAM Journal on Optimization}, 24(3):1542--1566, January 2014.
\newblock Publisher: Society for Industrial and Applied Mathematics.

\bibitem{bhatia_riemannian_2006}
Rajendra Bhatia and John Holbrook.
\newblock Riemannian geometry and matrix geometric means.
\newblock {\em Linear Algebra and its Applications}, 413(2):594--618, March
  2006.

\bibitem{bhatia_bureswasserstein_2019}
Rajendra Bhatia, Tanvi Jain, and Yongdo Lim.
\newblock On the {Bures}–{Wasserstein} distance between positive definite
  matrices.
\newblock {\em Expositiones Mathematicae}, 37(2):165--191, June 2019.

\bibitem{bhattacharya_omnibus_2017}
Rabi Bhattacharya and Lizhen Lin.
\newblock Omnibus {Clts} for {Fréchet} {Means} and {Nonparametric} {Inference}
  on {Non}-{Euclidean} {Spaces}.
\newblock {\em Proceedings of the American Mathematical Society},
  145(1):413--428, 2017.
\newblock Publisher: American Mathematical Society.

\bibitem{brunel_concentration_2023}
Victor-Emmanuel Brunel and Jordan Serres.
\newblock Concentration of empirical barycenters in metric spaces, March 2023.
\newblock arXiv:2303.01144 [math, stat].

\bibitem{bulte_autoregressive_2024}
Matthieu Bulté and Helle Sørensen.
\newblock An {Autoregressive} {Model} for {Time} {Series} of {Random}
  {Objects}, September 2024.
\newblock arXiv:2405.03778.

\bibitem{bulte_medoid_2024}
Matthieu Bulté and Helle Sørensen.
\newblock Medoid splits for efficient random forests in metric spaces.
\newblock {\em Computational Statistics \& Data Analysis}, 198:107995, October
  2024.

\bibitem{chavel_cambridge_2006}
Isaac Chavel.
\newblock {\em Cambridge studies in advanced mathematics: {Riemannian}
  geometry: {A} modern introduction series number 98}.
\newblock Cambridge studies in advanced mathematics. Cambridge University
  Press, Cambridge, England, 2 edition, April 2006.

\bibitem{chewi_gradient_2020}
Sinho Chewi, Tyler Maunu, Philippe Rigollet, and Austin~J. Stromme.
\newblock Gradient descent algorithms for {Bures}-{Wasserstein} barycenters.
\newblock In {\em Proceedings of {Thirty} {Third} {Conference} on {Learning}
  {Theory}}, pages 1276--1304. PMLR, July 2020.
\newblock ISSN: 2640-3498.

\bibitem{dubey_frechet_2019}
Paromita Dubey and Hans-Georg Müller.
\newblock Fréchet analysis of variance for random objects.
\newblock {\em Biometrika}, 106(4):803--821, December 2019.

\bibitem{dubey_frechet_2020}
Paromita Dubey and Hans-Georg Müller.
\newblock Fréchet change-point detection.
\newblock {\em The Annals of Statistics}, 48(6):3312--3335, December 2020.
\newblock Publisher: Institute of Mathematical Statistics.

\bibitem{eltzner_smeary_2018}
Benjamin Eltzner and Stephan~F. Huckemann.
\newblock A {Smeary} {Central} {Limit} {Theorem} for {Manifolds} with
  {Application} to {High} {Dimensional} {Spheres}, January 2018.
\newblock arXiv:1801.06581.

\bibitem{frechet_elements_1948}
Maurice Fréchet.
\newblock Les éléments aléatoires de nature quelconque dans un espace
  distancié.
\newblock In {\em Annales de l'institut {Henri} {Poincaré}}, volume~10, pages
  215--310, 1948.
\newblock Issue: 4.

\bibitem{hanneke_universal_2021}
Steve Hanneke, Aryeh Kontorovich, Sivan Sabato, and Roi Weiss.
\newblock Universal {Bayes} consistency in metric spaces.
\newblock {\em The Annals of Statistics}, 49(4):2129--2150, August 2021.
\newblock Publisher: Institute of Mathematical Statistics.

\bibitem{helgason_differential_2001}
Sigurdur Helgason.
\newblock {\em Differential geometry, lie groups and symmetric spaces}.
\newblock Graduate studies in mathematics. American Mathematical Society,
  Providence, RI, June 2001.

\bibitem{hundrieser_lower_2024}
Shayan Hundrieser, Benjamin Eltzner, and Stephan~F. Huckemann.
\newblock A {Lower} {Bound} for {Estimating} {Fréchet} {Means}, February 2024.
\newblock arXiv:2402.12290 [math, stat].

\bibitem{jiang_testing_2023}
Feiyu Jiang, Hanjia Gao, and Xiaofeng Shao.
\newblock Testing serial independence of object-valued time series.
\newblock {\em Biometrika}, page asad069, November 2023.

\bibitem{kendall_probability_1990}
Wilfrid~S. Kendall.
\newblock Probability, {Convexity}, and {Harmonic} {Maps} with {Small} {Image}
  {I}: {Uniqueness} and {Fine} {Existence}.
\newblock {\em Proceedings of the London Mathematical Society},
  s3-61(2):371--406, 1990.
\newblock \_eprint:
  https://onlinelibrary.wiley.com/doi/pdf/10.1112/plms/s3-61.2.371.

\bibitem{kostenberger_robust_2023}
Georg Köstenberger and Thomas Stark.
\newblock Robust {Signal} {Recovery} in {Hadamard} {Spaces}, July 2023.
\newblock arXiv:2307.06057 [math, stat].

\bibitem{le_consistency_1998}
Huiling Le.
\newblock On the {Consistency} of {Procrustean} {Mean} {Shapes}.
\newblock {\em Advances in Applied Probability}, 30(1):53--63, 1998.
\newblock Publisher: Applied Probability Trust.

\bibitem{lehmann_testing_2022}
E~L Lehmann and Joseph~P Romano.
\newblock {\em Testing statistical hypotheses}.
\newblock Springer texts in statistics. Springer Nature, Cham, Switzerland, 4
  edition, June 2022.

\bibitem{malago_wasserstein_2018}
Luigi Malagò, Luigi Montrucchio, and Giovanni Pistone.
\newblock Wasserstein {Riemannian} geometry of {Gaussian} densities.
\newblock {\em Information Geometry}, 1(2):137--179, December 2018.

\bibitem{mardia2009directional}
Kanti~V Mardia and Peter~E Jupp.
\newblock {\em Directional statistics}.
\newblock John Wiley \& Sons, 2009.

\bibitem{mccormack_equivariant_2023}
A~McCormack and P~D Hoff.
\newblock Equivariant estimation of {Fréchet} means.
\newblock {\em Biometrika}, page asad014, February 2023.

\bibitem{mccormack_stein_2022}
Andrew McCormack and Peter Hoff.
\newblock The {Stein} effect for {Fréchet} means.
\newblock {\em The Annals of Statistics}, 50(6):3647--3676, December 2022.
\newblock Publisher: Institute of Mathematical Statistics.

\bibitem{panaretos_invitation_2020}
Victor~M. Panaretos and Yoav Zemel.
\newblock {\em An {Invitation} to {Statistics} in {Wasserstein} {Space}}.
\newblock {SpringerBriefs} in {Probability} and {Mathematical} {Statistics}.
  Springer International Publishing, Cham, 2020.

\bibitem{petersen_frechet_2019}
Alexander Petersen and Hans-Georg Muller.
\newblock Frechet regression for random objects with {Euclidean} predictors.
\newblock page~29, 2019.

\bibitem{petersen_functional_2016}
Alexander Petersen and Hans-Georg Müller.
\newblock Functional data analysis for density functions by transformation to a
  {Hilbert} space.
\newblock {\em The Annals of Statistics}, 44(1), February 2016.
\newblock arXiv: 1601.02869.

\bibitem{ramsay_functional_2005}
J.~O. Ramsay and B.~W. Silverman.
\newblock {\em Functional data analysis}.
\newblock Springer series in statistics. Springer, New York, 2nd ed edition,
  2005.

\bibitem{schotz_convergence_2019-1}
Christof Schötz.
\newblock Convergence rates for the generalized {Fréchet} mean via the
  quadruple inequality.
\newblock {\em Electronic Journal of Statistics}, 13(2):4280--4345, January
  2019.
\newblock Publisher: Institute of Mathematical Statistics and Bernoulli
  Society.

\bibitem{srivastava_functional_2016}
Anuj Srivastava and Eric~P. Klassen.
\newblock {\em Functional and {Shape} {Data} {Analysis}}.
\newblock Springer {Series} in {Statistics}. Springer New York, New York, NY,
  2016.

\bibitem{auscher_probability_2003}
Karl-Theodor Sturm.
\newblock Probability measures on metric spaces of nonpositive curvature.
\newblock In Pascal Auscher, Thierry Coulhon, and Alexander Grigor’yan,
  editors, {\em Contemporary {Mathematics}}, volume 338, pages 357--390.
  American Mathematical Society, Providence, Rhode Island, 2003.

\bibitem{thanwerdas_theoretically_2022}
Yann Thanwerdas and Xavier Pennec.
\newblock Theoretically and {Computationally} {Convenient} {Geometries} on
  {Full}-{Rank} {Correlation} {Matrices}.
\newblock {\em SIAM Journal on Matrix Analysis and Applications},
  43(4):1851--1872, December 2022.
\newblock Publisher: Society for Industrial and Applied Mathematics.

\bibitem{doi:10.1126/science.1169823}
J.~R. Toggweiler.
\newblock Shifting westerlies.
\newblock {\em Science}, 323(5920):1434--1435, 2009.

\bibitem{vaart_asymptotic_1998}
A.~W. van~der Vaart.
\newblock {\em Asymptotic statistics}.
\newblock Cambridge series in statistical and probabilistic mathematics.
  Cambridge University Press, Cambridge, UK ; New York, NY, USA, 1998.

\bibitem{van_den_boogaart_bayes_2014}
Karl~Gerald van~den Boogaart, Juan~José Egozcue, and Vera Pawlowsky-Glahn.
\newblock Bayes {Hilbert} {Spaces}.
\newblock {\em Australian \& New Zealand Journal of Statistics},
  56(2):171--194, June 2014.

\bibitem{yokota_law_2018}
Takumi Yokota.
\newblock Law of large numbers in {CAT}(1)-spaces of small radii.
\newblock {\em Calculus of Variations and Partial Differential Equations},
  57(2):35, February 2018.

\end{thebibliography}

\end{document}